\theoremstyle{plain}
\newtheorem{theorem}{Theorem}
\newtheorem{proposition}{Proposition}
\newtheorem{corollary}{Corollary}
\newtheorem{lemma}{Lemma}
\theoremstyle{definition}
\title{Information Spreading on Almost Torus Networks}
\author[1]{Antonia Maria Masucci
\thanks{Email: \href{mailto:antonia-maria.masucci@ensea.fr}{antonia-maria.masucci@ensea.fr}}}
\affil[1]{ETIS/ENSEA - Universit\'e de Cergy Pontoise - CNRS\\ 6, avenue du Ponceau\\ 95014 Cergy-Pontoise\\ France}
\author[2]{Alonso Silva
\thanks{Email: \href{mailto:alonso.silva@alcatel-lucent.com}{alonso.silva@alcatel-lucent.com}
To whom correspondence should be addressed.}}
\affil[2]{Alcatel-Lucent Bell Labs France\\ Centre de Villarceaux\\ Route de Villejust\\ 91620 Nozay\\ France}
\date{}
\begin{document}

\maketitle

\begin{abstract}

Epidemic modeling has been extensively used in the last years
in the field of telecommunications and computer networks.
We consider the popular Susceptible-Infected-Susceptible spreading model
as the metric for information spreading.
In this work, 
we analyze information spreading on a particular class of networks
denoted almost torus networks and over the lattice which can be
considered as the limit when the torus length goes to infinity.
Almost torus networks consist on the torus network topology
where some nodes or edges have been removed.
We find explicit expressions for the characteristic polynomial
of these graphs and tight lower bounds for its computation.
These expressions allow us to estimate their spectral radius 
and thus how the information spreads on these networks.

\end{abstract}

\section{Introduction}

There exists an abundant literature
on epidemic modeling and in particular on epidemics on networks (see e.g., the books 
\cite{BrauerDJ2008,DaleyG1999,DraiefM2010}
and references therein).
In the last decades,  epidemic modeling 
has been extensively used in the field of
telecommunications and computer networks.
For example, epidemic models have been applied in order
to analyze the spread of computer viruses 
and worms~\cite{KephartW1991},
they have also been applied on epidemic
routing for delay tolerant networks~\cite{ZhangNKT2007},~etc.
The structure of networks 
plays a critical role in the spread of a 
viral message and the authors of~\cite{BampoEMSW2008,GaneshMT2005,LeskovecAH2006}
have identified some conditions for successful performance of viral marketing.
In particular, the authors of~\cite{GaneshMT2005}
have shown that the spectral radius
of the graph determines the epidemic
lifetime and its coverage.
More recently on~\cite{Wei2012},
the authors have studied  the intertwined propagation
of two competing ``memes'' (or viruses, rumors, etc.)
in a composite network (individual agents
are represented across two planes e.g. Facebook and Twitter).
The authors have realized that the meme persistence $\delta$ and the meme strength $\beta$
on each plane, and the spectral radius of the graph,
completely determine which of the two competing ``memes'' prevail.

The topologies of focus in this work, torus network topologies, are commonly used in
the production of high-end computing systems~\cite{TangLDBY2011}. 
A number of supercomputers on the TOP500 list
use three-dimensional torus networks~\cite{TOP500}.
For instance,
IBM's Blue Gene L~\cite{BlueGeneL,Adiga2005} and Blue Gene P~\cite{BlueGeneP};
Cray's XT and XT3~\cite{Cray} systems use
three-dimensional torus networks for node communication. 
IBM Blue Gene/Q uses a five-dimensional torus network~\cite{Chen2011}.
Fujitsu's K computer and the PRIMEHPC FX10 use a 
proprietary six-dimensional torus interconnect called Tofu~\cite{Ajima2009}.
Torus networks are used because of a combination of their
linear per node cost scaling and their competitive overall
performance.

In this work, we analyze information spreading on a particular class of networks
denoted almost torus networks where we assume the popular
Susceptible-Infected-Susceptible model as 
model of information spreading.
Almost torus networks consist on the torus network topology
where some nodes or edges have been removed.
This situation can model the failure of some
computer nodes or connections between the computer nodes.
As we will see, in those graphs, the study of the
spectral radius of the graph is determinant in order to analyze the information spreading.
We find explicit expressions for the characteristic polynomial of these graphs and very tight
lower bounds for its computation. These expressions allow us to estimate
their spectral radius and thus how the information spreads on these networks.

The outline of the work is as follows.
In Section~\ref{sec:preliminaries}, we recall some preliminary notions of graph theory.
In Section~\ref{sec:modelSIS}, we present the Susceptible-Infected-Susceptible model of information spreading.
Then, in Section~\ref{sec:OneRemoved}, we analyze the information spreading 
in the almost torus network where one node has been removed and in Section \ref{sec:SetRemoved} we extend 
our results to the cases when a set of nodes has been removed 
and when an edge has been removed from the torus network.
In Section~\ref{sec:lower}, we provide lower bounds for the two-dimensional torus network.
In Section~\ref{sec:numerical}, we present numerical results that validate our analysis.
Finally, in Section~\ref{sec:conclusions} we conclude.

\section{Preliminaries}\label{sec:preliminaries}

Let $\mathcal{G}=(\mathcal{V},\mathcal{E})$ denote an undirected graph
with no self-loops.
We denote by \mbox{$\mathcal{V}=\mathcal{V}(\mathcal{G})=\{v_1,\ldots,v_n\}$}
the set of nodes and by \mbox{$\mathcal{E}=\mathcal{E}(\mathcal{G})\subseteq\mathcal{V}\times\mathcal{V}$}
the set of undirected edges of~$\mathcal{G}$.
If~$\{v_i,v_j\}\in\mathcal{E}(\mathcal{G})$ we call nodes $v_i$ and $v_j$
{\it adjacent} (or neighbors), which we denote by $v_i\sim v_j$.
We define the set of neighbors of node~$v$ as
\mbox{$\mathcal{N}_v=\{w\in\mathcal{V}: \{v,w\}\in\mathcal{E}\}$}.
The {\it degree} of a node~$v$, denoted by~$\mathrm{deg}_v$, corresponds to the number of neighbors of~$v$,
i.e. the cardinality of the set $\mathcal{N}_v$.
We define a {\it walk} of length~$k$ from~$v_0$ to~$v_k$ to be
an ordered sequence of nodes $(v_0,v_1,\ldots,v_k)$ such that
$v_i\sim v_{i+1}$ for $i=0,1,\ldots,k-1$.
If~$v_0=v_k$, then the walk is closed.

Graphs can be algebraically represented via matrices.
The {\it adjacency matrix} of an undirected graph~$\mathcal{G}$,
denoted by \mbox{$A=A(\mathcal{G})$},
is an $n\times n$ symmetric matrix defined entry-wise as
\begin{equation*}
A_{ij}=
\left\{
\begin{array}{rl}
1 & \quad\textrm{if $v_i$ and $v_j$ are adjacent,}\\
0 & \quad\textrm{otherwise.}
\end{array}
\right.
\end{equation*}

We recall the well-known result that for $k\in\mathbb{N}$,
$A^k_{ij}$
is the number of paths of length $k$ connecting the $i$-th
and $j$-th vertices (proof by induction).
Since $A^0$ is the identity matrix, we thus
accept the existence of walks of length zero.
We use $I$ to denote the identity matrix,
where its order is determined by the context.

We define the {\it Laplacian matrix}~$L$ for graphs without loops or multiple edges, as follows:
\begin{equation*}
L_{ij}=\left\{
\begin{array}{rl}
\mathrm{deg}_{v_i} & \quad\text{if $v_i=v_j$},\\
-1 & \quad\text{if $v_i$ and $v_j$ are adjacent},\\
0 & \quad\text{otherwise}.
\end{array}
\right.
\end{equation*}
We notice that the Laplacian of a graph can be written as
$L=D-A$ where 
$D$ is a diagonal matrix whose diagonal entries correspond to the degree of each node
and $A$ is the adjacency matrix.

The {\it spectral radius} of a graph $\mathcal{G}$, denoted $\rho(A)$,
is the size of the largest eigenvalue (in absolute value) of the adjacency matrix of the graph,
i.e. $\rho(A)=\max_i(\lvert\lambda_i\rvert)$.
Since $A$ is a symmetric matrix with nonnegative entries, all its eigenvalues are real.
The {\it characteristic polynomial} of $\mathcal{G}$,
denoted $\phi(\mathcal{G},x)$ is defined as ${\mathrm{det}}(xI-A)$,
that corresponds to the characteristic polynomial of the adjacency matrix~$A$.
The {\it walk generating function}\footnote{This can be viewed indifferently as a matrix
with rational functions as entries,
or as a formal power series in~$x$
over the  ring of all polynomials
in the matrix~$A$.}
$W(\mathcal{G},x)$
is defined to be \mbox{$(I-xA)^{-1}$}.
The $ij$-entry of $W(\mathcal{G},x)$ will be written
as~$W_{ij}(\mathcal{G},x)$.
If $\mathcal{S}$ is a subset of $\mathcal{V}(\mathcal{G})$
then $\mathcal{G}\setminus\mathcal{S}$ is the subgraph of $\mathcal{G}$
induced by the vertices not in $\mathcal{S}$.
We normally write
$\mathcal{G}\setminus i$ instead of $\mathcal{G}\setminus\{i\}$
and $\mathcal{G}\setminus ij$ instead of $\mathcal{G}\setminus\{i,j\}$.

\section{Model of Information Spreading}\label{sec:modelSIS}

We use the popular Susceptible-Infected-Susceptible (SIS) model of viral spreading~\cite{Hethcote2000}
as the metric for information spreading.
We remark that our results can be easily extended to the 
Susceptible-Infected$_1$-Infected$_2$-Susceptible (SI$_1$I$_2$S) model~\cite{Wei2012}
of spreading over composite networks.
We consider that each node can be in two possible states: susceptible (of being infected) or infected.
We denote these two states as~$\mathcal{S}$ and $\mathcal{I}$, respectively.

The result presented on this section was first found in~\cite{Wang2003} and~\cite{Chakrabarti2008},
through mean-field approximations of the Markov chain
evolution of the $2^n$ possible states.
We believe this alternative proof to be simpler and we presented here for completeness.

Consider a population of $n$ nodes interconnected via an undirected graph
$\mathcal{G}=(\mathcal{V},\mathcal{E})$. Time is slotted.
In each time slot, infected nodes attempt to contaminate
their susceptible neighbors, where each infection attempt
is successful with a probability~$\beta$, independent
of other infection attempts.
The parameter $\beta$ is called the virus birth rate (or meme strength as denoted in~\cite{Wei2012}).
Each infected node recover in time slot $t$ with probability~$\delta$.
The parameter $\delta$ is called virus curing rate (or meme persistence).
We notice that this parameter captures the meme persistence
in an inverse way, i.e. a high $\delta$ means low persistence.
This means a very contagious virus will be modeled with a low $\delta$ value.

We define $p_{i,t}$ as the probability that node~$i$
is infected at time~$t$.
We define $\zeta_{i,t+1}$ as
the probability that node~$i$
will not receive infections from
its neighbors in the next time-slot,
which is given by
\begin{equation}\label{eq:orsongo1}
\zeta_{i,t+1}=\prod_{j\in\mathcal{N}_i}(1-\beta p_{j,t}),
\end{equation}
where $\mathcal{N}_i$ denotes the set of neighbors of node~$i$.
This expression can be interpreted as the probability
that none of the infection attempts is successful.

The probability for a node~$i$ 
of not being infected at time~\mbox{$t+1$}
depends on the conditioning event
if the node~$i$ was or was not
infected at time~$t$:
\begin{enumerate}
\item\label{tumbuktu3} 
For the first case, if the node~$i$ was infected at time $t$,
then the probability for not being infected at time~$t+1$
is equal to $\delta\zeta_{i,t+1}$,

\item\label{tumbuktu2} 
For the second case, if the node~$i$ was not infected at time $t$,
then the probability for not being infected at time~$t+1$
is equal to $\zeta_{i,t+1}$.
\end{enumerate}

From \ref{tumbuktu3}) and \ref{tumbuktu2}),
the probability for a node~$i$ 
of not being infected at time $t+1$ 
is equal to
\begin{align*}
1-p_{i,t+1}
={}&\zeta_{i,t+1}(1-p_{i,t})+\delta\zeta_{i,t+1}p_{i,t}.
\end{align*}

Replacing from eq.~\eqref{eq:orsongo1}, we obtain that
\begin{equation}\label{eq:mali1}
1-p_{i,t+1}=(1-p_{i,t})\prod_{j\in\mathcal{N}_i}(1-\beta p_{j,t})+\delta p_{i,t}\prod_{j\in\mathcal{N}_i}(1-\beta p_{j,t}).
\end{equation}

We focus on the criterion based on the asymptotic stability
of the disease-free equilibrium $p_i^*(t)=0$ for all~$i$.
For doing this, we will make use of the following theorem.

\begin{figure}[h!]
 \centering
  \includegraphics[width=0.49\textwidth]{./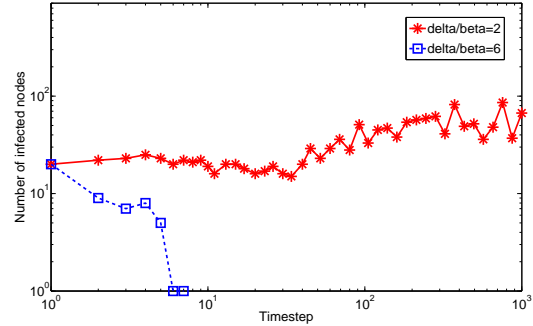}
  \caption{Number of infected nodes vs time.}
  \label{fig:number_of_infected_nodes_vs_time}
\end{figure}

\hspace{-32mm}
\begin{figure*}[ht]
  \begin{minipage}[b]{0.49\linewidth}
    \centering
    \includegraphics[angle=280,width=\linewidth]{./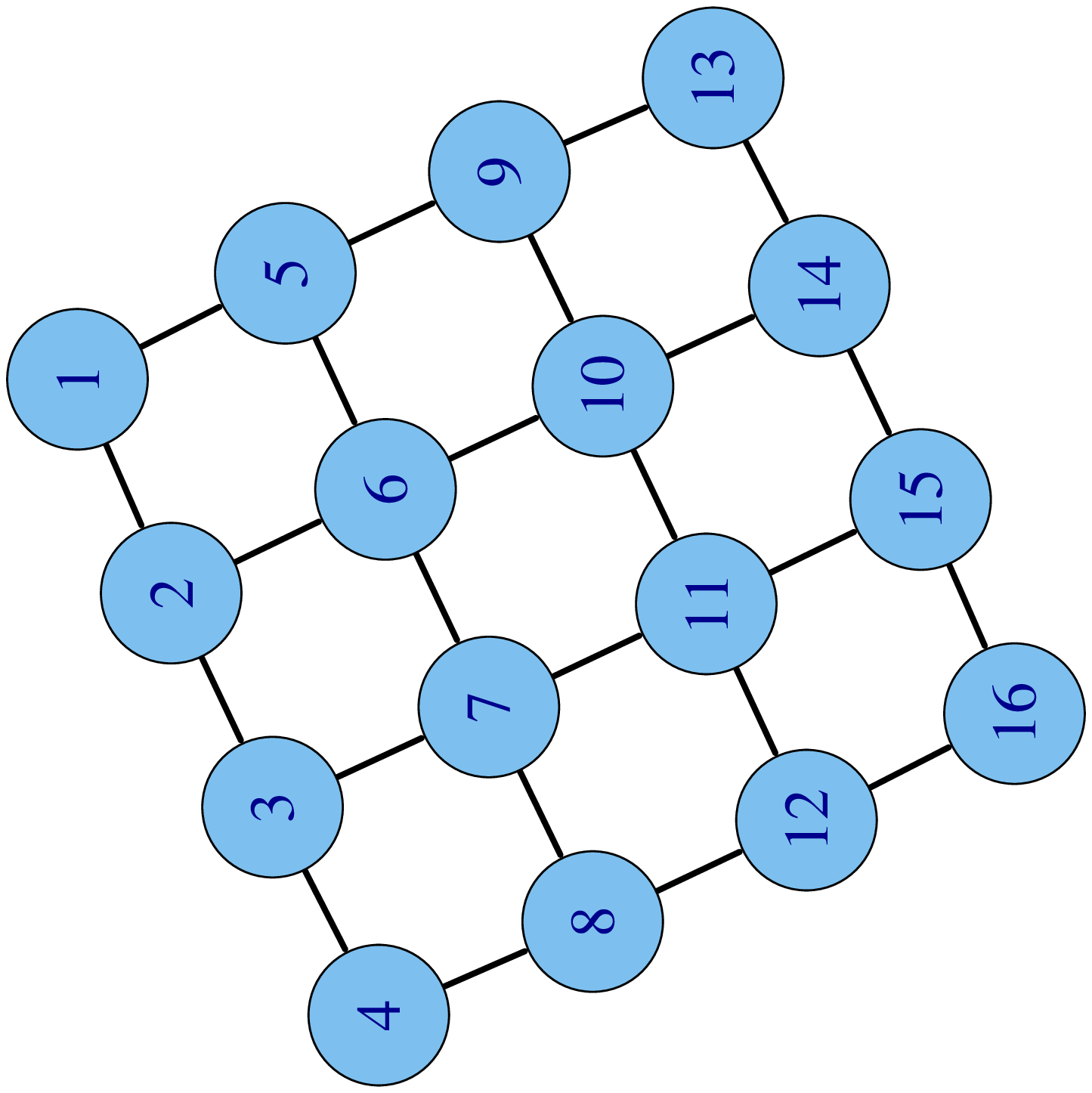}
    \caption{A two-dimensional grid of length $4$.}
    \label{fig:two_dimensional_grid}
  \end{minipage}
  \hspace{0.5cm}
  \begin{minipage}[b]{0.49\linewidth}
    \centering
    \includegraphics[angle=280,width=\linewidth]{./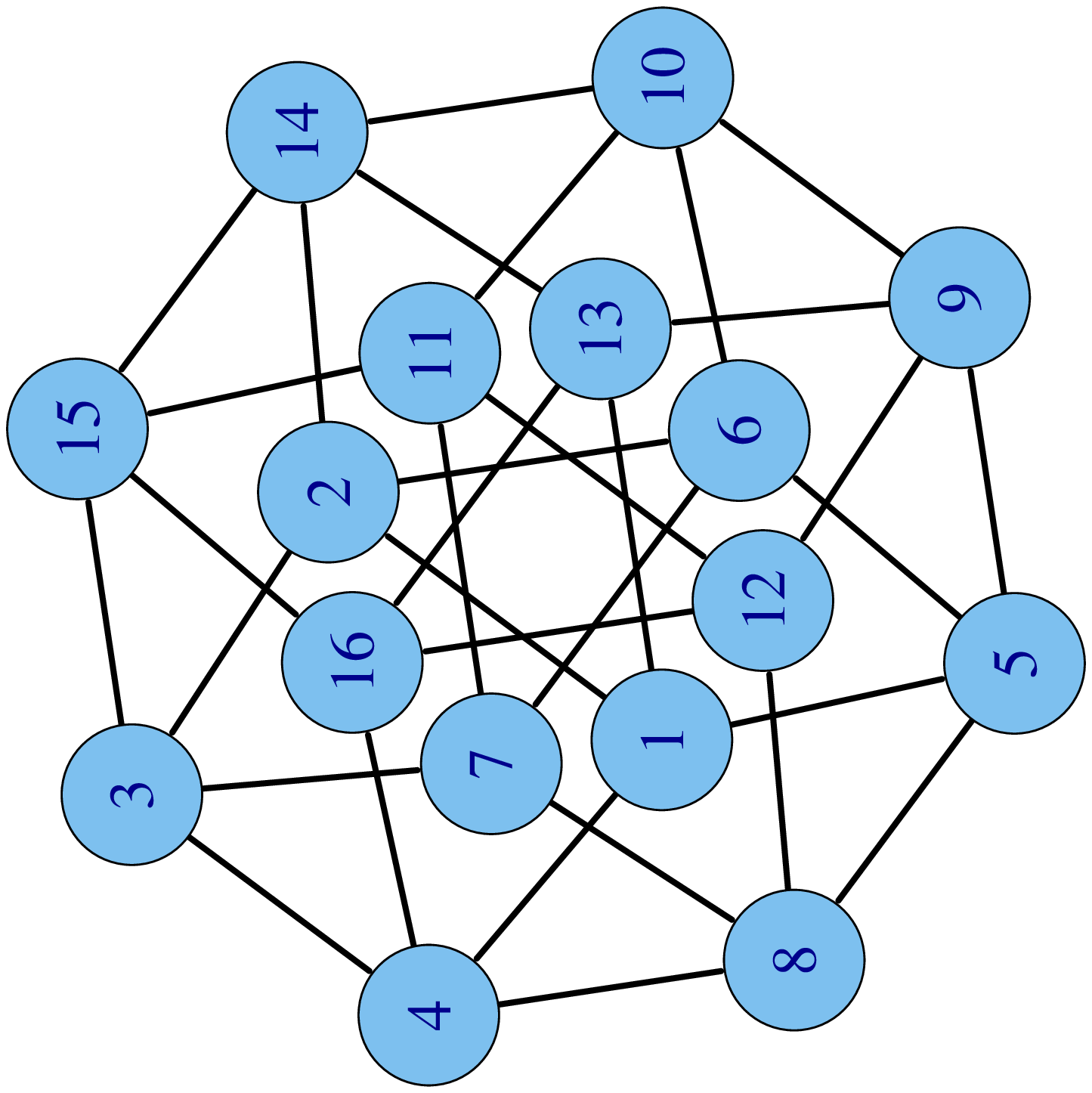}
    \caption{A two-dimensional torus of length $4$.}
    \label{fig:two_dimensional_torus}
  \end{minipage}
\end{figure*}

\begin{theorem}[Hirsch and Smale, 1974~\cite{Hirsch1974}]\label{thm:hirsch}
The system given by $\mathbf{p}_{t+1}=g(\mathbf{p}_t)$
is asymptotically stable at an equilibrium point~$\mathbf{p}^*$,
if the eigenvalues of the Jacobian $J=\nabla g(\mathbf{p})$
are less than $1$ in absolute value, where
\begin{equation*}
J_{k,l}=[\nabla g(\mathbf{p}^*)]_{k,l}
=\frac{\partial p_{k,t+1}}{\partial p_{l,t}}\bigg|_{\mathbf{p}_t=\mathbf{p}^*}.
\end{equation*}
\end{theorem}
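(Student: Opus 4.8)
The plan is to reduce the nonlinear recursion to its linearization at the equilibrium and then exploit the spectral condition $\rho(J)<1$ to make the map a genuine contraction in a suitably chosen norm. First I would introduce the displacement $\mathbf{q}_t=\mathbf{p}_t-\mathbf{p}^*$ and use the fixed-point relation $g(\mathbf{p}^*)=\mathbf{p}^*$ together with the first-order Taylor expansion of $g$ about $\mathbf{p}^*$ to write
\begin{equation*}
\mathbf{q}_{t+1}=J\,\mathbf{q}_t+r(\mathbf{q}_t),
\qquad
\frac{\lVert r(\mathbf{q})\rVert}{\lVert \mathbf{q}\rVert}\longrightarrow 0
\quad\text{as }\mathbf{q}\to 0,
\end{equation*}
where $J=\nabla g(\mathbf{p}^*)$ and $r$ collects the higher-order terms. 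This isolates the linear part $J\mathbf{q}_t$, whose iterates are governed entirely by the spectrum of $J$.

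The key algebraic ingredient is to avoid the Euclidean operator norm (which may exceed $1$ for a non-normal $J$ even when every eigenvalue is small) and instead work in an adapted norm. Since by hypothesis every eigenvalue satisfies $\lvert\lambda_i\rvert<1$, we have $\rho(J)<1$. By a standard fact, provable either from the Jordan canonical form or from $\lim_{k\to\infty}\lVert J^k\rVert^{1/k}=\rho(J)$, for any $\varepsilon>0$ with $\rho(J)+\varepsilon<1$ there is a vector norm $\lVert\cdot\rVert_*$ whose induced operator norm obeys $\lVert J\rVert_*\le\rho(J)+\varepsilon=:c<1$.

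I would then combine the two ingredients. Using the remainder estimate, choose a ball $B$ around $\mathbf{p}^*$ small enough that $\lVert r(\mathbf{q})\rVert_*\le\eta\,\lVert\mathbf{q}\rVert_*$ for all $\mathbf{q}$ with $\mathbf{q}+\mathbf{p}^*\in B$, where $\eta$ is picked so that $c+\eta<1$. The triangle inequality and submultiplicativity then give $\lVert\mathbf{q}_{t+1}\rVert_*\le(c+\eta)\lVert\mathbf{q}_t\rVert_*$. Consequently, once an iterate enters $B$ it never leaves, and $\lVert\mathbf{q}_t\rVert_*\le(c+\eta)^t\lVert\mathbf{q}_0\rVert_*\to 0$, so $\mathbf{p}_t\to\mathbf{p}^*$; since all norms on a finite-dimensional space are equivalent, this is exactly local asymptotic stability.

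The main obstacle is the two-sided bookkeeping in the last paragraph: one must guarantee that the remainder bound holds \emph{uniformly} on a forward-invariant neighborhood, so that the contraction factor $c+\eta$ is valid at every step rather than only infinitesimally at $\mathbf{p}^*$. Establishing the existence of the adapted norm $\lVert\cdot\rVert_*$ with $\lVert J\rVert_*<1$ is the conceptual crux, as the naive Euclidean estimate is insufficient in the non-normal case; everything else is routine once that norm is in hand.
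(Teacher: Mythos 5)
The paper does not prove this statement at all: it is imported verbatim as a known result from Hirsch and Smale~\cite{Hirsch1974}, so there is no in-paper argument to compare against. Your proof is the standard one for the discrete-time linearization criterion and it is correct: the decomposition $\mathbf{q}_{t+1}=J\mathbf{q}_t+r(\mathbf{q}_t)$, the adapted (Householder-type) norm with $\lVert J\rVert_*\le\rho(J)+\varepsilon<1$ to handle non-normal $J$, and the contraction estimate $\lVert\mathbf{q}_{t+1}\rVert_*\le(c+\eta)\lVert\mathbf{q}_t\rVert_*$ on a forward-invariant ball together give local asymptotic stability. One small remark: the uniformity you flag as the ``main obstacle'' is actually automatic --- differentiability of $g$ at $\mathbf{p}^*$ means precisely that for every $\eta>0$ there is a ball on which $\lVert r(\mathbf{q})\rVert_*\le\eta\lVert\mathbf{q}\rVert_*$ holds for \emph{all} $\mathbf{q}$ in that ball, so no extra argument is needed there; the genuine crux is, as you say, the existence of the adapted norm. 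In the context of this paper the theorem is applied with $J=(1-\delta)I+\beta A$, which is symmetric, so even the Euclidean norm would suffice, but your more general argument covers the theorem as stated.
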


We rewrite eq.~\eqref{eq:mali1} as follows:
\begin{equation*}
p_{i,t+1}=1-(1-p_{i,t})\prod_{j\in\mathcal{N}_i}(1-\beta p_{j,t})-\delta p_{i,t}\prod_{j\in\mathcal{N}_i}(1-\beta p_{j,t}).
\end{equation*}

Then
\begin{equation}\label{eq:mali2}
J_{k,l}=
\left\{
\begin{array}{rl}
1-\delta & \textrm{if}\quad k=l,\\
\beta & \textrm{if $l$ is neighbor of $k$},\\
0 & \textrm{otherwise},
\end{array}
\right.
\end{equation}
where we recall that the evaluating point is $\mathbf{p}^*=0$.

Eq.~\eqref{eq:mali2} can be written in a more compact way
as \mbox{$J=(1-\delta)I+\beta A$}.
Using Theorem~\ref{thm:hirsch}, we obtain that
for an asymptotic stability of the disease-free equilibrium
we need to impose that the eigenvalues of $(1-\delta)I+\beta A$
are in absolute value smaller than $1$,
or equivalently,
\begin{equation}\label{eq:spectral_timbuktu}
\rho(A)<\frac{\delta}{\beta}.
\end{equation}

In Fig.~\ref{fig:number_of_infected_nodes_vs_time},
we present two scenarios where
we consider a total population of $900$ nodes
and an initial seeding of $20$ nodes
with the same probability of infection \mbox{$\beta_1=\beta_2=0.1$}
but with different probabilities of recovery,
$\delta_1=0.2$ for the solid curve and
$\delta_2=0.6$ for the dashed curve.
We consider a graph with spectral radius $\rho(A)=4$.
In the second scenario, as predicted by the analysis and
equation~\eqref{eq:spectral_timbuktu}, the virus or information spreading
dies out, however for the first scenario the virus or information spreading
may continue as time increases.

\section{Information Spreading over the Torus with one removed node}\label{sec:OneRemoved} 

In a regular grid topology, each node in the network
is connected with at most two neighbors along one or more directions (see e.g. Fig.~\ref{fig:two_dimensional_grid}).
If the network is one-dimensional and we connect the first and last nodes, then the resulting
topology consists on a chain of nodes connected in a circular loop,
which is known as ring.
In general, when an $n$-dimensional grid 
network is connected circularly in more than one dimension, 
the resulting network topology is a torus (see e.g. Fig.~\ref{fig:two_dimensional_torus}).
In this work, we consider torus networks with the same number of nodes in every direction.
From Fig.~\ref{fig:two_dimensional_grid}, by connecting each first node to the last node
in each direction we obtain Fig.~\ref{fig:two_dimensional_torus}. For example,
if we connect node~$1$ with node~$4$ in the horizontal direction and
node~$1$ with node~$13$ in the vertical direction, we obtain the neighbors
of node~$1$ on the torus (nodes $2$, $4$, $5$, $13$) as shown in Fig.~\ref{fig:two_dimensional_torus}.

From the previous section,
we obtained that the spectral radius
is an important quantity to study
if our interest is the spreading
of information (or virus spreading) through the network.
The following proposition give us a relationship
between the spectral radius and the degrees of the nodes.

\begin{lemma}\cite{Lovasz2007}
Let $\mathrm{deg}_{\min}$ denote the minimum degree of~$\mathcal{G}$, let~$\overline{\mathrm{deg}}$
be the average degree of $\mathcal{G}$, and let~$\mathrm{deg}_{\max}$ be the maximum degree of~$\mathcal{G}$.
For every graph~$\mathcal{G}$,
\begin{equation*}
\max\{\overline{\mathrm{deg}},\sqrt{\mathrm{deg}_{\max}}\}\le\rho(A)\le \mathrm{deg}_{\max}.
\end{equation*}
\end{lemma}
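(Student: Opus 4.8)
The plan is to establish the three inequalities $\overline{\mathrm{deg}}\le\rho(A)$, $\sqrt{\mathrm{deg}_{\max}}\le\rho(A)$, and $\rho(A)\le\mathrm{deg}_{\max}$ separately. Throughout I use that $A$ is real, symmetric and entrywise nonnegative, so its eigenvalues are real and the largest one, denoted $\lambda_1$, admits the Rayleigh characterization $\lambda_1=\max_{x\neq 0}(x^{\top}Ax)/(x^{\top}x)$; since $\rho(A)=\max_i|\lambda_i|\ge\lambda_1\ge 0$, any lower bound on a Rayleigh quotient is automatically a lower bound on $\rho(A)$.

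For the upper bound $\rho(A)\le\mathrm{deg}_{\max}$ I would take an eigenvector $x$ associated with an eigenvalue $\lambda$ satisfying $|\lambda|=\rho(A)$ and choose an index $k$ maximizing $|x_k|$. From $\lambda x_k=(Ax)_k=\sum_{j\in\mathcal{N}_k}x_j$ and the triangle inequality, $\rho(A)\,|x_k|\le\sum_{j\in\mathcal{N}_k}|x_j|\le\mathrm{deg}_k\,|x_k|\le\mathrm{deg}_{\max}\,|x_k|$, and dividing by $|x_k|>0$ yields the claim. This is exactly Gershgorin's disc theorem applied to $A$, whose discs are all centered at $0$ with radius $\mathrm{deg}_i$.

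For $\overline{\mathrm{deg}}\le\rho(A)$ I would simply evaluate the Rayleigh quotient at the all-ones vector $\mathbf{1}$. Since $\mathbf{1}^{\top}A\mathbf{1}=\sum_{i,j}A_{ij}=\sum_i\mathrm{deg}_i$ and $\mathbf{1}^{\top}\mathbf{1}=n$, the quotient equals $(\sum_i\mathrm{deg}_i)/n=\overline{\mathrm{deg}}$, whence $\rho(A)\ge\lambda_1\ge\overline{\mathrm{deg}}$.

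The only step that calls for a genuine idea is $\sqrt{\mathrm{deg}_{\max}}\le\rho(A)$. Let $v$ be a vertex of maximum degree $d:=\mathrm{deg}_{\max}$, and test the Rayleigh quotient against the vector $x$ equal to $\sqrt{d}$ at $v$, to $1$ at each of the $d$ neighbors of $v$, and to $0$ elsewhere. Then $x^{\top}x=d+d=2d$, while in $x^{\top}Ax=2\sum_{\{i,j\}\in\mathcal{E}}x_ix_j$ the $d$ edges joining $v$ to its neighbors already contribute $2d\sqrt{d}$, and every remaining edge (necessarily between two neighbors of $v$, or having a zero endpoint) contributes a nonnegative amount, so $x^{\top}Ax\ge 2d^{3/2}$. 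The quotient is therefore at least $\sqrt{d}$, giving $\rho(A)\ge\sqrt{\mathrm{deg}_{\max}}$. The main obstacle is choosing this test vector: it is precisely the Perron eigenvector of the star $K_{1,d}$ induced on $v$ and its neighbors, whose spectral radius is $\sqrt{d}$, so an alternative route is to invoke monotonicity of the spectral radius under passage to a subgraph together with the known spectrum of a star.
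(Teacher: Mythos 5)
Your proof is correct and complete: the Gershgorin/maximal-coordinate argument for $\rho(A)\le\mathrm{deg}_{\max}$, the Rayleigh quotient at $\mathbf{1}$ for $\overline{\mathrm{deg}}\le\rho(A)$, and the star-shaped test vector (equivalently, monotonicity of $\rho$ over the subgraph $K_{1,\mathrm{deg}_{\max}}$) for $\sqrt{\mathrm{deg}_{\max}}\le\rho(A)$ are each sound, and the computations $x^{\top}x=2d$ and $x^{\top}Ax\ge 2d^{3/2}$ check out. Note, however, that the paper itself gives no proof of this lemma --- it is stated with a citation to Lov\'asz --- so there is no in-paper argument to compare against; your write-up is exactly the standard self-contained proof one would supply.
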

\vspace{3mm}

We observe that for a $k$-regular graph,
its average degree and its maximum degree
are equal to~$k$, and thus $k$ corresponds to its spectral radius.
This means that for a $d$-dimensional torus,
$2d$ corresponds to its spectral radius.

The previous lemma seems to prove
that there is no interest in studying
the spectral radius over the torus
since it is a well-known quantity.
However, if the torus network
is modified (removing some of the nodes
or edges to the graph) then the propagation of
information will change (see Fig.~\ref{fig:spectral_reduction_one_node}
and Fig.~\ref{fig:spectral_reduction_two_nodes}).
In the following analysis,
we will give an explicit closed-form
expression of these changes on the
spreading.

\begin{figure*}[htbp]
  \begin{minipage}[b]{0.49\linewidth}
    \centering
    \includegraphics[width=\linewidth]{./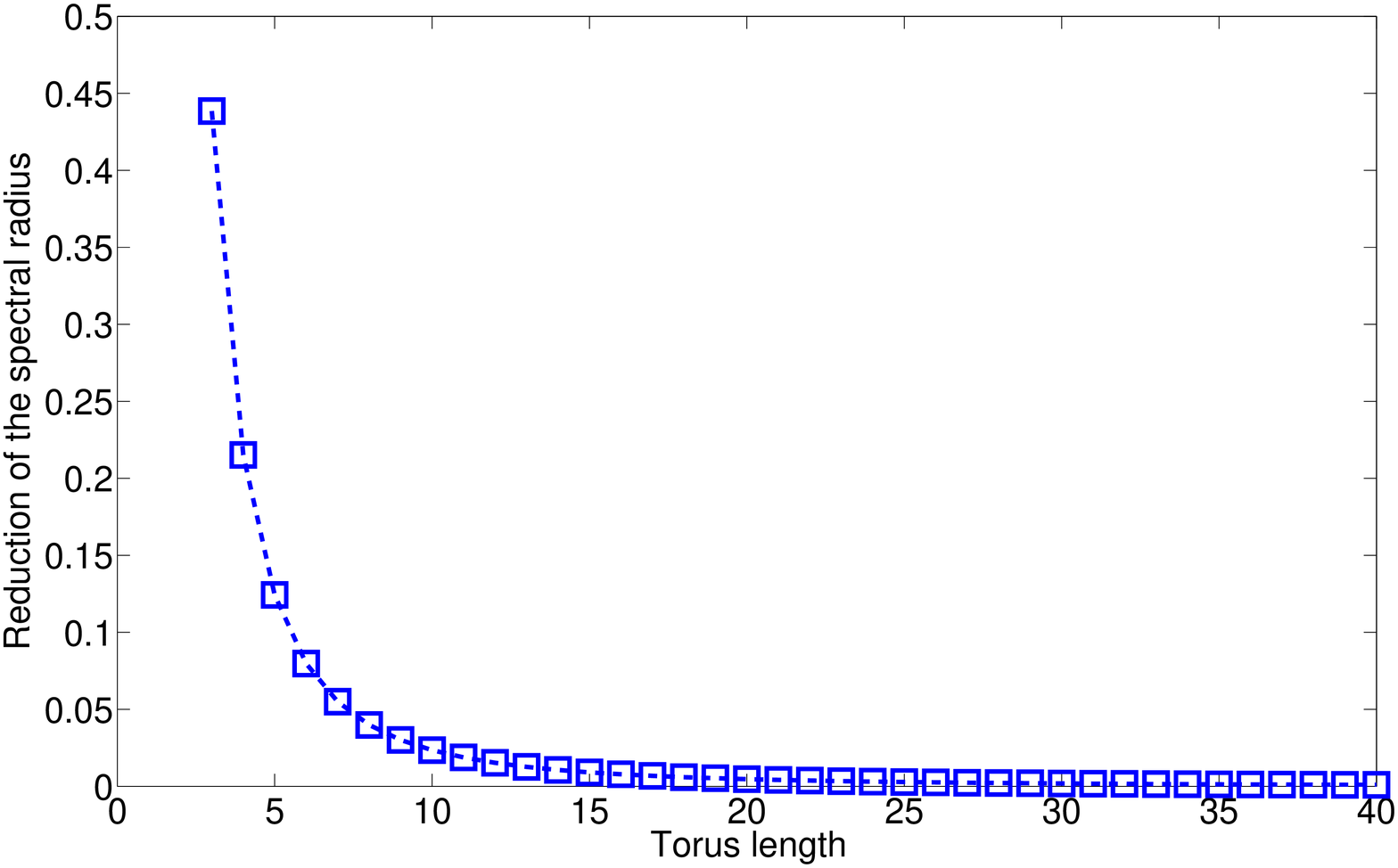}
    \caption{Spectral radius reduction by removing one node vs torus length.}
    \label{fig:spectral_reduction_one_node}
  \end{minipage}
  \hspace{0.5cm}
  \begin{minipage}[b]{0.49\linewidth}
    \centering
    \includegraphics[width=\linewidth]{./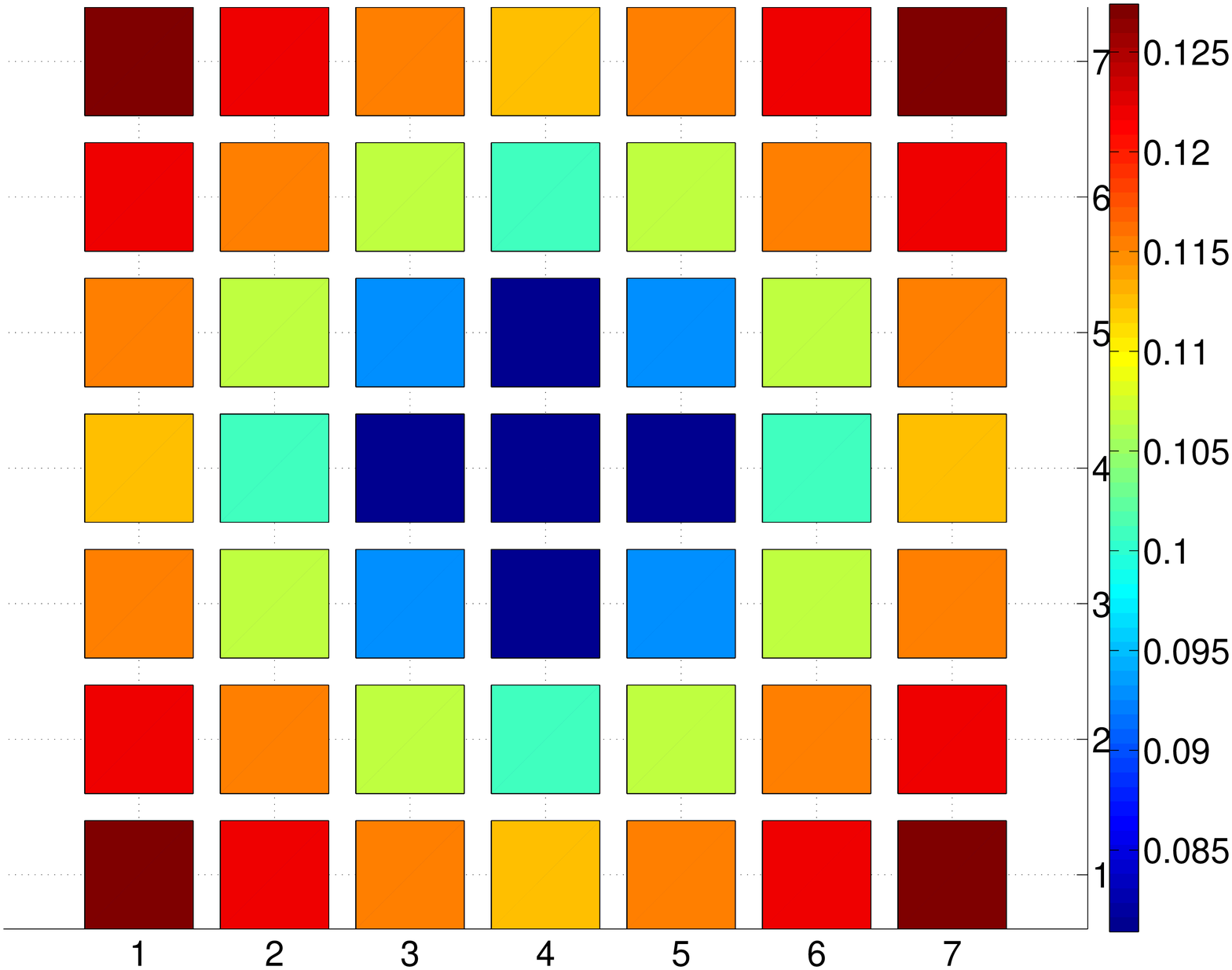}
    \caption{Spectral radius reduction by removing two nodes from a torus network of length~$7$.
One of the removed nodes is the central node.
We remove another node and we compute the spectral radius reduction and we
assign the value of this reduction at the removed node's position.}
    \label{fig:spectral_reduction_two_nodes}
  \end{minipage}
\end{figure*}

First of all, we recall a well-known result in linear algebra.
Cramer's rule~\cite{Cramer1750} states that a system of $n$ linear equations 
with $n$ unknowns, represented in matrix multiplication form $Mx=y$,
where the $n\times n$ matrix $M$ has a nonzero determinant
and the vector $x=(x_1,\ldots,x_n)^T$ is the column vector
of the unknown variables, has a unique solution,
whose individual values for the unknowns are given by
\begin{equation*}
x_i=\frac{{\mathrm{det}}(M_i)}{{\mathrm{det}}(M)},\quad\forall i\in\{1,\ldots,n\},
\end{equation*}
where $M_i$ is the matrix formed by replacing the $i$-th column
of $M$ by the column vector $y$.

We use Cramer's rule in the next lemma,
in order to establish the connection
between the characteristic polynomials
and walk generating functions.
This connection will allow us
to compute the spectral radius
of the modified torus when we remove
a node.

\begin{lemma}\cite{Godsil1992}\label{lemma:antonia}
For any graph~$\mathcal{G}$ we have
\begin{equation*}
x^{-1}W_{ii}(\mathcal{G},x^{-1})=\phi(\mathcal{G}\setminus i,x)/\phi(\mathcal{G},x).
\end{equation*}
\end{lemma}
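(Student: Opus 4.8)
The plan is to reduce the left-hand side to a single entry of the resolvent $(xI-A)^{-1}$ and then apply Cramer's rule exactly as recalled just above the statement. First I would substitute $x^{-1}$ for $x$ in the definition $W(\mathcal{G},x)=(I-xA)^{-1}$, which gives $W(\mathcal{G},x^{-1})=(I-x^{-1}A)^{-1}$. The key algebraic observation is the factorization $I-x^{-1}A=x^{-1}(xI-A)$, so that $(I-x^{-1}A)^{-1}=x(xI-A)^{-1}$. Passing to the $(i,i)$ entry and multiplying by $x^{-1}$, the powers of $x$ cancel and the left-hand side collapses to
\begin{equation*}
x^{-1}W_{ii}(\mathcal{G},x^{-1})=\bigl[(xI-A)^{-1}\bigr]_{ii}.
\end{equation*}
Thus the whole lemma is equivalent to identifying this diagonal entry of the resolvent with the claimed ratio of characteristic polynomials.

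Next I would interpret $\bigl[(xI-A)^{-1}\bigr]_{ii}$ via Cramer's rule. Setting $M=xI-A$ and solving the linear system $Mz=e_i$ (with $e_i$ the $i$-th standard basis vector), the $i$-th unknown $z_i$ equals exactly $[M^{-1}]_{ii}$. By Cramer's rule this is $z_i=\det(M_i)/\det(M)$, where $M_i$ is obtained from $M$ by replacing its $i$-th column by $e_i$. The denominator is immediately $\det(xI-A)=\phi(\mathcal{G},x)$, so it remains only to evaluate the numerator $\det(M_i)$.

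For the numerator I would expand $\det(M_i)$ by a Laplace (cofactor) expansion along its $i$-th column, which now contains a single nonzero entry equal to $1$ in position $(i,i)$. This yields $\det(M_i)=(-1)^{i+i}\det\bigl(M_{\hat{i}\hat{i}}\bigr)=\det\bigl(M_{\hat{i}\hat{i}}\bigr)$, where $M_{\hat{i}\hat{i}}$ is the principal submatrix obtained by deleting row $i$ and column $i$. The final point is to recognize this submatrix: deleting the $i$-th row and column from $xI-A$ removes exactly the contributions of vertex $i$, leaving $xI'-A'$, where $A'$ is the adjacency matrix of the induced subgraph $\mathcal{G}\setminus i$ and $I'$ is the identity of the appropriate smaller order. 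Hence $\det\bigl(M_{\hat{i}\hat{i}}\bigr)=\phi(\mathcal{G}\setminus i,x)$, and combining the three steps gives the claimed identity.

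The calculation is short and the only genuine bookkeeping is in the last two steps: verifying that the cofactor sign is $(-1)^{2i}=1$ and, more substantively, checking that deleting the $i$-th row and column of the adjacency matrix $A$ produces precisely the adjacency matrix of the vertex-deleted subgraph $\mathcal{G}\setminus i$ as defined in the preliminaries. I expect this identification of the principal minor with $\phi(\mathcal{G}\setminus i,x)$ to be the main (though routine) obstacle; everything else is the factorization of $I-x^{-1}A$ and a direct invocation of Cramer's rule.
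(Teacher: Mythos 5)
Your proposal is correct and follows essentially the same route as the paper's proof: reduce $x^{-1}W_{ii}(\mathcal{G},x^{-1})$ to a diagonal entry of the resolvent via the factorization $I-x^{-1}A=x^{-1}(xI-A)$, apply Cramer's rule, and identify the resulting principal minor of $xI-A$ with $\phi(\mathcal{G}\setminus i,x)$. You simply spell out the cofactor expansion and the identification of the deleted submatrix in more detail than the paper does.
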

\vspace{3mm}

\begin{proof}
We have that $W(\mathcal{G},x)=(I-xA)^{-1}=x^{-1}(x^{-1}I-A)^{-1}$.
The entries of $(x^{-1}I-A)^{-1}$ are given by Cramer's rule
(by noting that $(x^{-1}I-A)^{-1}$ corresponds to the matrix $M$
such that $(x^{-1}I-A)M=I$).
The $i$-th diagonal entry of $(x^{-1}I-A)^{-1}$ is
given by the $i$-th principal diagonal minor of $(x^{-1}I-A)$
divided by ${\mathrm{det}}(x^{-1}I-A)=\phi(\mathcal{G},x^{-1})$.
We note that the $i$-th principal diagonal minor of $(x^{-1}I-A)$ is $\phi(\mathcal{G}\setminus i, x^{-1})$,
and so the lemma follows immediately from this.
\end{proof}

\vspace{3mm}

We are interested on finding the spectral radius
of the almost regular torus, in order
to analyze the information
spreading on this topology.
In the case of the removal of one node,
this is equivalent to find
$\phi(\mathcal{G}\setminus i,x)$.
From Lemma~\ref{lemma:antonia},
we need to know the characteristic
polynomial of the regular torus $\phi(\mathcal{G},x)$
and the diagonal entries of the
walking generating function of the regular torus
$W_{ii}(\mathcal{G},x^{-1})$
since
\begin{equation}\label{eq:timbuktu23}
\phi(\mathcal{G}\setminus i,x)
=x^{-1}W_{ii}(\mathcal{G},x^{-1})\phi(\mathcal{G},x).
\end{equation}

In the next proposition,
we give an explicit expression for
the characteristic polynomial of the two-dimensional 
torus network of length~$m$.

\begin{proposition}\label{prop:characteristic_poly_torus}
The characteristic polynomial
of the two-dimensional torus
network, denoted~$\mathcal{T}^2_m$,
is given by
\begin{equation*}
\phi(\mathcal{T}^2_m,x)=
\prod_{1\le i,j\le m}\left(x-2\cos(2\pi i/m)-2\cos(2\pi j/m)\right).
\end{equation*}
\end{proposition}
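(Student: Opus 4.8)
The plan is to recognize the two-dimensional torus $\mathcal{T}^2_m$ as the Cartesian product of two cycle graphs $C_m$, and then to diagonalize its adjacency matrix using the discrete Fourier basis. First I would observe that, under the natural labeling of the $m^2$ vertices by pairs $(r,s)$ with $r,s\in\{0,\ldots,m-1\}$, the vertex $(r,s)$ is adjacent to $(r\pm 1,s)$ and $(r,s\pm 1)$ with indices taken modulo $m$. This is exactly the Cartesian product structure, so the adjacency matrix can be written as the Kronecker sum
\begin{equation*}
A(\mathcal{T}^2_m)=A(C_m)\otimes I_m+I_m\otimes A(C_m),
\end{equation*}
where $A(C_m)$ is the $m\times m$ adjacency matrix of the cycle and $I_m$ is the $m\times m$ identity.

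Next I would compute the spectrum of $A(C_m)$. Since $A(C_m)$ is a circulant matrix (each row is a cyclic shift of the vector having ones in positions $\pm 1 \bmod m$), it is diagonalized by the Fourier matrix whose columns are $f_k=(1,\omega^k,\omega^{2k},\ldots,\omega^{(m-1)k})^T$, with $\omega$ a primitive $m$-th root of unity. A direct computation gives $A(C_m)f_k=(\omega^k+\omega^{-k})f_k=2\cos(2\pi k/m)f_k$, so the eigenvalues of $C_m$ are $\{2\cos(2\pi k/m):k=1,\ldots,m\}$ (with $k=m$ recovering the value at $k=0$, matching the index range in the statement).

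The key step is then to pass from the factors to the product. Because $A(C_m)$ is diagonalized by the $f_k$, the tensor-product vectors $\{f_i\otimes f_j\}$ simultaneously diagonalize both $A(C_m)\otimes I_m$ and $I_m\otimes A(C_m)$; applying the Kronecker-sum identity yields
\begin{equation*}
A(\mathcal{T}^2_m)\,(f_i\otimes f_j)=\bigl(2\cos(2\pi i/m)+2\cos(2\pi j/m)\bigr)(f_i\otimes f_j).
\end{equation*}
Thus the $m^2$ vectors $f_i\otimes f_j$, $1\le i,j\le m$, form a complete eigenbasis, and the eigenvalues of $A(\mathcal{T}^2_m)$ are exactly the sums $2\cos(2\pi i/m)+2\cos(2\pi j/m)$. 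Since $\phi(\mathcal{T}^2_m,x)={\mathrm{det}}(xI-A)=\prod_\lambda(x-\lambda)$ over the eigenvalues $\lambda$ counted with multiplicity, the claimed product formula follows.

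I do not anticipate a serious obstacle: the only points requiring care are checking that the product labeling reproduces precisely the torus adjacencies, and confirming that the $f_i\otimes f_j$ genuinely form a basis so that the $m^2$ pairs exhaust the spectrum with correct multiplicities, rather than any delicate estimate.
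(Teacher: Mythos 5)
Your proposal is correct, and it rests on the same structural idea as the paper's proof---decompose the torus as a product of two length-$m$ rings and add the ring eigenvalues---but the execution differs in a way worth noting. You work directly with the adjacency matrix, writing $A(\mathcal{T}^2_m)=A(C_m)\otimes I_m+I_m\otimes A(C_m)$ and diagonalizing $A(C_m)$ as a circulant via the complex Fourier vectors $f_k$, so the eigenvalues $2\cos(2\pi i/m)+2\cos(2\pi j/m)$ fall out of the Kronecker-sum identity in one step. The paper instead starts from the \emph{Laplacian} of the ring (with the real sine/cosine eigenvectors and eigenvalues $2-2\cos(2\pi k/m)$), invokes the additivity of Laplacian eigenvalues under the graph product to get the Laplacian spectrum $4-2\cos(2\pi i/m)-2\cos(2\pi j/m)$ of the torus, and only then converts back to adjacency eigenvalues using the $4$-regularity of the torus via $\det(L-\lambda I)=(-1)^n\det(A-(4-\lambda)I)$. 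Your route is the more economical one: it avoids the Laplacian detour and the regularity argument entirely, and the circulant diagonalization makes the completeness of the eigenbasis immediate (the DFT matrix is invertible), whereas with the real sine/cosine basis one has to be slightly careful about which vectors to discard ($x_0$, and $x_{m/2}$ for even $m$). What the paper's route buys is consistency with the rest of its development, since the same real Laplacian eigenvectors are reused later (e.g.\ in Section~\ref{sec:SetRemoved}) to build the orthonormal eigenbasis $\hat{Z}$ for computing walk generating functions. Both arguments are complete and reach the same conclusion.
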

\vspace{3mm}

\begin{proof}
Consider $R_m$ the ring graph which has edge set 
\mbox{$\{(u,u+1):\ 1\le u<m\}\cup\{(m,1)\}$}.
The Laplacian of $R_m$ has eigenvectors~\cite{Spielman2009}:
\begin{equation}\label{eq:eigenvectors_doumbodo}
x_k(u) = \sin(2\pi k u/m)\ \textrm{and}\ 
y_k(u) = \cos(2\pi k u/m).
\end{equation}
for~$k\le m/2$.
Both of these eigenvectors have
eigenvalue \mbox{$\lambda_k=2-2\cos(2\pi k/m)$}.
We notice that $x_0$ should be ignored and $y_0$
is the all $1$s vector. If $m$ is even, then also
$x_{m/2}$ should be ignored.

We recall that the product of two graphs $\mathcal{G}=(\mathcal{V},\mathcal{E})$ and $\mathcal{H}=(\mathcal{W},\mathcal{F})$,
denoted by $\mathcal{G}\times \mathcal{H}$, corresponds to the graph with vertex set \mbox{$\mathcal{V}\times\mathcal{W}$}
and edge set $\left((v_1,w_1),(v_2,w_2)\right)$ where \mbox{$(v_1,v_2)\in \mathcal{E}$}
and \mbox{$(w_1,w_2)\in \mathcal{F}$}.
If $\mathcal{G}$ has Laplacian eigenvalues $\lambda_1,\ldots,\lambda_m$
and eigenvectors $p_1,\ldots,p_m$;
and $\mathcal{H}$ has Laplacian eigenvalues $\mu_1,\ldots,\mu_m$
and eigenvectors $q_1,\ldots,q_m$;
then for each $1\le i\le m$ and $1\le j\le m$,
$\mathcal{G}\times \mathcal{H}$ has an eigenvector $z$ of eigenvalue $\lambda_i+\mu_j$
given by $z(v,w)=p_i(v)q_j(w)$.

In our case, the two-dimensional torus can be written as
the product of two rings, $\mathcal{G}=\mathcal{H}=R_m$,
and then $\lambda_i=\mu_i=2-2\cos(2\pi i/m)\quad\forall 1\le i\le m$.
Thus the eigenvalues of the Laplacian of the two-dimensional
torus network of length $m$ are given by
\begin{equation*}
\lambda_i+\mu_j
=4-2\cos(2\pi i/m)-2\cos(2\pi j/m)\ \forall 1\le i,j\le m.
\end{equation*}

We recall that the eigenvalues of a matrix $M$ are the solutions $\lambda$
to the equation ${\mathrm{det}}(M-\lambda I)=0$. For a $k$-regular graph of $n$ nodes we have
\begin{align*}
{\mathrm{det}}(&L-\lambda I)
={\mathrm{det}}(D-A-\lambda I)\\
&={\mathrm{det}}(-[A-(k-\lambda)I])
=(-1)^n{\mathrm{det}}(A-(k-\lambda)I),
\end{align*}
which means that $\lambda$ is an eigenvalue of $L$ if and only if $k-\lambda$ is
an eigenvalue of $A$.

In our case, the two-dimensional torus network is a $4$-regular graph,
and thus the eigenvalues of the two-dimensional
torus network of length $m$ are given by
$2\cos(2\pi i/m)+2\cos(2\pi j/m)$
for all $1\le i,j\le m$.
Thus we conclude that the characteristic polynomial is equal to
\begin{equation*}
\phi(\mathcal{T}^2_m,x)=
\prod_{1\le i,j\le m}\left(x-2\cos(2\pi i/m)-2\cos(2\pi j/m)\right).
\end{equation*}

We notice from the proof that the eigenvectors of the Laplacian coincide
with the eigenvectors of the adjacency matrix and are given by
the Kronecker product of the eigenvectors given by~\eqref{eq:eigenvectors_doumbodo}.
\end{proof}

In the following proposition,
we find an explicit expression for the diagonal 
entries of the walk generating function.

\begin{proposition}\label{prop:wgf_timbuktu}
The diagonal entries of the walk generating function of the 
two-dimensional torus of length~$m$ are:
\begin{align}\label{eq:doumbodo_321}
&W_{ii}(\mathcal{T}^2_m,x)=\\
&\sum_{\ell\ge 0}
\frac{x^{\ell}}{m^2}
\sum_{1\le i,j\le m}
4^\ell
\left(\cos(\pi(i+j)/m)\right)^\ell
\left(\cos(\pi(i-j)/m)\right)^\ell.\nonumber
\end{align}
\end{proposition}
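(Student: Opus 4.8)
The plan is to expand the walk generating function as a formal power series in $x$ and then read off its coefficients one degree at a time. Since $W(\mathcal{T}^2_m,x)=(I-xA)^{-1}=\sum_{\ell\ge 0}x^\ell A^\ell$ (valid as a formal power series, as in the footnote of Section~\ref{sec:preliminaries}), the diagonal entry is
\[
W_{vv}(\mathcal{T}^2_m,x)=\sum_{\ell\ge 0}x^\ell\,(A^\ell)_{vv},
\]
where I write $v$ for the fixed vertex, to keep it distinct from the spectral indices $i,j$ appearing in the statement. It therefore suffices to show, for every $\ell\ge 0$, that
\[
(A^\ell)_{vv}=\frac{1}{m^2}\sum_{1\le i,j\le m}4^\ell\bigl(\cos(\pi(i+j)/m)\bigr)^\ell\bigl(\cos(\pi(i-j)/m)\bigr)^\ell.
\]
Recalling that $(A^\ell)_{vv}$ counts closed walks of length $\ell$ based at $v$, I would evaluate this quantity through the spectral decomposition of $A$.

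The key input is the eigenstructure already established in Proposition~\ref{prop:characteristic_poly_torus}: the adjacency matrix of $\mathcal{T}^2_m$ has eigenvalues $\lambda_{i,j}=2\cos(2\pi i/m)+2\cos(2\pi j/m)$ for $1\le i,j\le m$, with eigenvectors obtained as Kronecker products of the ring eigenvectors~\eqref{eq:eigenvectors_doumbodo}. To extract $(A^\ell)_{vv}$ cleanly I would pass to the orthonormal complex Fourier eigenvectors $\psi_{i,j}(u,w)=\tfrac{1}{m}e^{2\pi\mathrm{i}(iu+jw)/m}$, which diagonalise $A$ with the same eigenvalues. Because these eigenvectors satisfy $|\psi_{i,j}(v)|^2=1/m^2$ at \emph{every} vertex $v$, the spectral expansion $(A^\ell)_{vv}=\sum_{i,j}\lambda_{i,j}^\ell\,|\psi_{i,j}(v)|^2$ collapses to
\[
(A^\ell)_{vv}=\frac{1}{m^2}\sum_{1\le i,j\le m}\lambda_{i,j}^\ell,
\]
manifestly independent of $v$, as it must be by the vertex-transitivity of the torus. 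Equivalently, one may invoke vertex-transitivity at the outset and write $(A^\ell)_{vv}=\tfrac{1}{m^2}\,\mathrm{tr}(A^\ell)=\tfrac{1}{m^2}\sum_{i,j}\lambda_{i,j}^\ell$, bypassing the eigenvector moduli entirely.

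It then remains to recast $\lambda_{i,j}^\ell$ into the product form of the statement. Applying the sum-to-product identity $\cos\alpha+\cos\beta=2\cos\tfrac{\alpha+\beta}{2}\cos\tfrac{\alpha-\beta}{2}$ to $\lambda_{i,j}=2\cos(2\pi i/m)+2\cos(2\pi j/m)$ gives $\lambda_{i,j}=4\cos(\pi(i+j)/m)\cos(\pi(i-j)/m)$, and hence $\lambda_{i,j}^\ell=4^\ell(\cos(\pi(i+j)/m))^\ell(\cos(\pi(i-j)/m))^\ell$. Substituting this into the collapsed expression for $(A^\ell)_{vv}$ and summing against $x^\ell$ reproduces exactly~\eqref{eq:doumbodo_321}. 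The single point demanding care — and the only genuine obstacle — is the justification that $(A^\ell)_{vv}$ is the same at every vertex: the real sine/cosine eigenvectors of~\eqref{eq:eigenvectors_doumbodo} do \emph{not} have constant squared modulus across vertices, so I would rely either on the complex Fourier basis, whose moduli are uniformly $1/m^2$, or on vertex-transitivity combined with the trace identity. Everything else is a routine power-series and trigonometric manipulation.
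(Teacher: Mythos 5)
Your proposal is correct and follows essentially the same route as the paper: the paper also expands $W_{ii}(\mathcal{G},x)=\sum_{\ell\ge 0}x^\ell A^\ell_{ii}$, uses vertex-transitivity to write $A^\ell_{ii}=\tfrac{1}{m^2}\mathrm{tr}(A^\ell)=\tfrac{1}{m^2}\sum_{i,j}\lambda_{i,j}^\ell$, and applies the same sum-to-product identity. Your remark on justifying the vertex-independence of $A^\ell_{vv}$ (via the constant-modulus complex Fourier basis rather than the real sine/cosine eigenvectors) is a sound refinement of the paper's appeal to the indistinguishability of torus nodes, but it is not a different argument.
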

\vspace{3mm}

\begin{proof}
We recall from Section~\ref{sec:preliminaries}
that $A^{\ell}_{ii}$ corresponds to
the closed walks of length~$\ell$.
Since each node is indistinguishable on the torus network, we know that
$
\mathrm{tr}(A^\ell)
=n\times\textrm{\# of closed walks of length~$\ell$},
$
where $n=m^2$ is the total number of nodes.
But we also have that
\begin{align*}
\mathrm{tr}(A^\ell)
&=\sum_{k=1}^{n}(\lambda_i(A))^\ell,\\
&=\sum_{1\le i,j\le m}(2\cos(2\pi i/m)+2\cos(2\pi j/m))^\ell,\\
&=\sum_{1\le i,j\le m}(4\cos(\pi (i+j)/m)\cos(\pi (i-j)/m))^\ell.
\end{align*}
Thus
\begin{equation}\label{eq:repeat}
A^{\ell}_{ii}=
\frac{1}{m^2}
\sum_{1\le i,j\le m}(4\cos(\pi (i+j)/m)\cos(\pi (i-j)/m))^\ell.
\end{equation}

Let us notice that
\begin{align}
W_{ii}(\mathcal{G},x)
={}&[(I-xA)^{-1}]_{ii}
=\left[\sum_{n\ge 0}(xA)^n\right]_{ii}
=\sum_{\ell\ge 0}x^\ell A^\ell_{ii}.\label{eq:boh345}
\end{align}
From eq.~\eqref{eq:repeat} and eq.~\eqref{eq:boh345} we conclude eq.~\eqref{eq:doumbodo_321}.
\end{proof}
\vspace{3mm}

From Proposition~\ref{prop:characteristic_poly_torus} and Proposition~\ref{prop:wgf_timbuktu}, 
we obtain the following theorem.
\begin{theorem}
The characteristic polynomial of the two-dimensional torus network
of length~$m$ where one node has been removed is given by:
\begin{align*}
&\phi(\mathcal{T}^2_m\setminus i,x)
=x^{-1}W_{ii}(\mathcal{T}^2_m,x^{-1})\phi(\mathcal{T}^2_m,x),\ \textrm{where}\\
&W_{ii}(\mathcal{T}^2_m,x^{-1})=\nonumber\\
&\sum_{\ell\ge 0}
\frac{x^{-\ell}}{m^2}
\sum_{1\le i,j\le m}
4^\ell
\left(\cos(\pi(i+j)/m)\right)^\ell
\left(\cos(\pi(i-j)/m)\right)^\ell,\\
&\phi(\mathcal{T}^2_m,x)=
\prod_{1\le i,j\le m}\left(x-2\cos(2\pi i/m)-2\cos(2\pi j/m)\right).
\end{align*}
\end{theorem}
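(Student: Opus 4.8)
The plan is to assemble the statement directly from the three results already in hand, since no new machinery is needed. First I would invoke Lemma~\ref{lemma:antonia} with $\mathcal{G}=\mathcal{T}^2_m$, which gives the rational-function identity $x^{-1}W_{ii}(\mathcal{T}^2_m,x^{-1})=\phi(\mathcal{T}^2_m\setminus i,x)/\phi(\mathcal{T}^2_m,x)$. Multiplying both sides by $\phi(\mathcal{T}^2_m,x)$ yields exactly eq.~\eqref{eq:timbuktu23}, namely $\phi(\mathcal{T}^2_m\setminus i,x)=x^{-1}W_{ii}(\mathcal{T}^2_m,x^{-1})\phi(\mathcal{T}^2_m,x)$, which is precisely the first displayed line of the theorem.

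Next I would substitute the two explicit expressions. For the walk generating function I would take the formula of Proposition~\ref{prop:wgf_timbuktu} and replace $x$ by $x^{-1}$ throughout, which turns the prefactor $x^{\ell}$ into $x^{-\ell}$ and leaves the trigonometric double sum untouched, producing the stated $W_{ii}(\mathcal{T}^2_m,x^{-1})$. For the characteristic polynomial of the unperturbed torus I would simply quote Proposition~\ref{prop:characteristic_poly_torus}. Inserting both into eq.~\eqref{eq:timbuktu23} reproduces the three displayed equations of the theorem verbatim, so the proof is essentially one line of specialization followed by two substitutions.

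The one thing worth flagging — more a remark than an obstacle — is the independence of the result from the choice of removed node. This is guaranteed by the vertex-transitivity of the torus: every node is indistinguishable, so $W_{ii}(\mathcal{T}^2_m,x)$ does not depend on the index $i$; indeed Proposition~\ref{prop:wgf_timbuktu} already computed it from $\tfrac{1}{m^2}\mathrm{tr}(A^\ell)$, an intrinsically $i$-free quantity. Hence $\phi(\mathcal{T}^2_m\setminus i,x)$ is the same polynomial for every $i$ and the notation is unambiguous. I do not expect to verify by hand the apparent mismatch that the right-hand side of eq.~\eqref{eq:timbuktu23} is a polynomial of degree $m^2$ times a formal power series in $x^{-1}$ while the left-hand side is a polynomial of degree $m^2-1$: Lemma~\ref{lemma:antonia} already asserts the equality as rational functions, so the cancellation of the negative and high positive powers of $x$ is automatic. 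The only genuine care needed is bookkeeping — keeping the summation indices $(i,j)$ in the double sum notationally distinct from the fixed removed-node index $i$, which a relabelling of the summation variables would make clean.
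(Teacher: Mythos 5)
Your proposal is correct and follows exactly the paper's route: the theorem is stated as an immediate consequence of Lemma~\ref{lemma:antonia} (via eq.~\eqref{eq:timbuktu23}) combined with Propositions~\ref{prop:characteristic_poly_torus} and~\ref{prop:wgf_timbuktu}, which is precisely your assembly. Your side remarks on vertex-transitivity and the index bookkeeping match observations the paper itself makes after the theorem.
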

\vspace{3mm}

We observe that all the previous calculations
do not depend on the particular removed node $i$.
This means that 
the spreading of information over
the modified torus,
if we remove one node,
is not affected by the position
of the removed node.
The importance is that one and only one node
is removed. In the next section,
we will see that this is very different
from the case of the removal of two nodes.

The previous results can also be derived for the $d$-dimensional torus network.
Since the proofs are similar to the proofs of Proposition~\ref{prop:characteristic_poly_torus} and Proposition~\ref{prop:wgf_timbuktu}
we do not include them here.

\begin{theorem}
The characteristic polynomial of the $d$-dimensional torus network
of length~$m$ where one node has been removed is given by:
\begin{equation*}
\phi(\mathcal{T}^d_m\setminus i,x)
=x^{-1}W_{ii}(\mathcal{T}^d_m,x^{-1})\phi(\mathcal{T}^d_m,x),
\end{equation*}
where the diagonal entries of the walk generating function~are
\begin{align*}
&W_{ii}(\mathcal{T}_m^d,x^{-1})=\nonumber\\
&\sum_{\ell\ge 0}
\frac{x^{-\ell}}{m^d}\!\!\!\!
\sum_{1\le i_1,i_2\ldots,i_d\le m}
\!\!\!\!\!\!\!\!\!\!\left(2\cos(2\pi i_1/m)+\ldots+2\cos(2\pi i_d/m)\right)^\ell,
\end{align*}
and the characteristic polynomial of the $d$-dimensional torus network is
given by
\begin{align*}
&\phi(\mathcal{T}_m^d,x)=\nonumber\\
&\prod_{1\le i_1,\ldots,i_d\le m} \left(x-2\cos(2\pi i_1/m)-\ldots-2\cos(2\pi i_d/m)\right).
\end{align*}
\end{theorem}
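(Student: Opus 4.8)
The plan is to establish the three displayed identities in turn. The first one,
$\phi(\mathcal{T}^d_m\setminus i,x)=x^{-1}W_{ii}(\mathcal{T}^d_m,x^{-1})\phi(\mathcal{T}^d_m,x)$,
is not specific to the torus at all: it is exactly the content of Lemma~\ref{lemma:antonia} applied to $\mathcal{G}=\mathcal{T}^d_m$, rewritten as in eq.~\eqref{eq:timbuktu23}. So the whole task reduces to computing $\phi(\mathcal{T}^d_m,x)$ and $W_{ii}(\mathcal{T}^d_m,x^{-1})$, generalizing Proposition~\ref{prop:characteristic_poly_torus} and Proposition~\ref{prop:wgf_timbuktu} from $d=2$ to arbitrary $d$.

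For the characteristic polynomial, I would realize the $d$-dimensional torus as the iterated product $\mathcal{T}^d_m = R_m\times\cdots\times R_m$ ($d$ factors) and argue by induction on $d$, using the product-eigenvalue rule invoked in the proof of Proposition~\ref{prop:characteristic_poly_torus}: if each ring contributes a Laplacian eigenvalue $2-2\cos(2\pi i_t/m)$, then the product graph has Laplacian eigenvalues $\sum_{t=1}^d\bigl(2-2\cos(2\pi i_t/m)\bigr)=2d-\sum_{t=1}^d 2\cos(2\pi i_t/m)$, indexed by tuples $(i_1,\ldots,i_d)$ with $1\le i_t\le m$. Because $\mathcal{T}^d_m$ is $2d$-regular, the same determinant manipulation as in Proposition~\ref{prop:characteristic_poly_torus}, namely ${\mathrm{det}}(L-\lambda I)=(-1)^{m^d}{\mathrm{det}}(A-(2d-\lambda)I)$, shows that $\lambda$ is a Laplacian eigenvalue iff $2d-\lambda$ is an adjacency eigenvalue; hence the adjacency eigenvalues are exactly $\sum_{t=1}^d 2\cos(2\pi i_t/m)$, and the product over all tuples yields the stated $\phi(\mathcal{T}^d_m,x)$.

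For the walk generating function I would repeat the argument of Proposition~\ref{prop:wgf_timbuktu} verbatim, with $d$ cosines in place of two. Since every vertex of the torus is indistinguishable, $\mathrm{tr}(A^\ell)=m^d\, A^\ell_{ii}$; on the other hand $\mathrm{tr}(A^\ell)=\sum(\text{eigenvalue})^\ell=\sum_{1\le i_1,\ldots,i_d\le m}\bigl(\sum_{t=1}^d 2\cos(2\pi i_t/m)\bigr)^\ell$ by the spectrum just computed. Dividing by $m^d$ gives $A^\ell_{ii}$, and substituting into $W_{ii}(\mathcal{G},x)=\sum_{\ell\ge 0}x^\ell A^\ell_{ii}$ (eq.~\eqref{eq:boh345}) and replacing $x$ by $x^{-1}$ produces the claimed expression for $W_{ii}(\mathcal{T}^d_m,x^{-1})$.

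The work is mostly bookkeeping, so I expect the only genuine subtlety to be verifying that the index range $1\le i_t\le m$ enumerates each eigenvalue with its correct multiplicity. Already in the ring this requires the care noted in Proposition~\ref{prop:characteristic_poly_torus} (the vector $x_0$ is dropped, $y_0$ is the all-ones vector, and $x_{m/2}$ is dropped when $m$ is even), and one must check that the Kronecker products of the sine/cosine eigenvectors from~\eqref{eq:eigenvectors_doumbodo} across the $d$ factors still form a complete orthogonal basis of $\mathbb{R}^{m^d}$. A second point worth flagging is that, unlike the $d=2$ case, there is no product-of-cosines identity collapsing $\sum_{t}2\cos(2\pi i_t/m)$ into a single factored expression, which is precisely why the statement for general $d$ is phrased directly in terms of the summed cosines rather than in the factored form~\eqref{eq:doumbodo_321}.
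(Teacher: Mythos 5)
Your proposal is correct and follows exactly the route the paper intends: the paper omits the proof, stating only that it is ``similar to the proofs of Proposition~\ref{prop:characteristic_poly_torus} and Proposition~\ref{prop:wgf_timbuktu}'', and your argument is precisely that generalization (Lemma~\ref{lemma:antonia} for the first identity, the iterated ring product plus $2d$-regularity for the spectrum, and the vertex-transitivity trace argument for $W_{ii}$). Your remarks on eigenvalue multiplicities and on the non-generalization of the product-of-cosines factorization are apt but do not change the approach.
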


\section{Information Spreading over the Torus with a set of removed nodes}\label{sec:SetRemoved}

Following a similar approach to Lemma~\ref{lemma:antonia}, Godsil~\cite{Godsil1992}
is able to prove the following theorem.

\begin{theorem}\label{thm:nodes_removal}[Nodes removal]\cite{Godsil1992}
Let $\mathcal{S}$ be a subset of $s$ nodes from the graph~$\mathcal{G}$.
Then
\begin{equation*}
x^{-s}{\mathrm{det}}\ W_{\mathcal{S},\mathcal{S}}(\mathcal{G},x^{-1})=
\phi(\mathcal{G}\setminus \mathcal{S},x)/\phi(\mathcal{G},x),
\end{equation*}
where $W_{\mathcal{S},\mathcal{S}}(\mathcal{G},x)$ denotes the submatrix of $W(\mathcal{G},x)$
with rows and columns indexed by the elements of~$\mathcal{S}$.
\end{theorem}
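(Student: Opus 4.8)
The plan is to mirror the Cramer's rule argument used in Lemma~\ref{lemma:antonia}, replacing the single-cofactor form of Cramer's rule by its multi-index generalization, namely Jacobi's identity for complementary minors of an inverse matrix. As in that lemma, I would first factor the walk generating function as $W(\mathcal{G},x)=(I-xA)^{-1}=x^{-1}(x^{-1}I-A)^{-1}$, so that the submatrix $W_{\mathcal{S},\mathcal{S}}(\mathcal{G},x)$ equals $x^{-1}$ times the principal $\mathcal{S}$-indexed submatrix of $B:=(x^{-1}I-A)^{-1}$. Passing to the $s\times s$ determinant then contributes a factor $x^{-s}$, which is exactly the source of the $x^{-s}$ appearing in the statement.

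Next I would reduce the computation of $\det B_{\mathcal{S},\mathcal{S}}$ to minors of $M:=x^{-1}I-A$. Since reordering rows and columns by the same permutation changes neither principal minors nor the determinant, I may assume $\mathcal{S}=\{1,\dots,s\}$ and write $M$ in block form with the $\mathcal{S}$-block first. The key tool is Jacobi's complementary-minor theorem: for an invertible matrix $M$ with inverse $B$ and a principal index set $\mathcal{S}$,
\[
\det B_{\mathcal{S},\mathcal{S}}=\frac{\det M_{\bar{\mathcal{S}},\bar{\mathcal{S}}}}{\det M},
\]
where $\bar{\mathcal{S}}$ is the complementary index set and, because rows and columns carry the same index set, no sign factor survives. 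I would then identify $M_{\bar{\mathcal{S}},\bar{\mathcal{S}}}=x^{-1}I-A(\mathcal{G}\setminus\mathcal{S})$, so that $\det M_{\bar{\mathcal{S}},\bar{\mathcal{S}}}=\phi(\mathcal{G}\setminus\mathcal{S},x^{-1})$, while $\det M=\phi(\mathcal{G},x^{-1})$.

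Assembling these pieces gives $\det W_{\mathcal{S},\mathcal{S}}(\mathcal{G},x)=x^{-s}\,\phi(\mathcal{G}\setminus\mathcal{S},x^{-1})/\phi(\mathcal{G},x^{-1})$. Substituting $x\mapsto x^{-1}$ throughout converts the prefactor $x^{-s}$ into $x^{s}$, and then multiplying through by the external $x^{-s}$ cancels it, yielding exactly
\[
x^{-s}\det W_{\mathcal{S},\mathcal{S}}(\mathcal{G},x^{-1})=\phi(\mathcal{G}\setminus\mathcal{S},x)/\phi(\mathcal{G},x),
\]
which is the claimed identity.

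The main obstacle is justifying Jacobi's identity, which is the genuine generalization of the single-cofactor form of Cramer's rule used earlier. I would prove it cleanly via the Schur complement: writing $C:=M_{\mathcal{S},\mathcal{S}}-M_{\mathcal{S},\bar{\mathcal{S}}}M_{\bar{\mathcal{S}},\bar{\mathcal{S}}}^{-1}M_{\bar{\mathcal{S}},\mathcal{S}}$, the block-inverse formula gives $B_{\mathcal{S},\mathcal{S}}=C^{-1}$, hence $\det B_{\mathcal{S},\mathcal{S}}=1/\det C$, while the Schur determinant formula gives $\det M=\det M_{\bar{\mathcal{S}},\bar{\mathcal{S}}}\cdot\det C$; dividing the two produces the identity. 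The only caveat is the invertibility of $M_{\bar{\mathcal{S}},\bar{\mathcal{S}}}$, which holds as an identity of rational functions in $x$ (one works over the field of rational functions, equivalently for all but finitely many $x$), matching the formal power series viewpoint already adopted for $W(\mathcal{G},x)$.
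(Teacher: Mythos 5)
Your proposal is correct: the factorization $W(\mathcal{G},x)=x^{-1}(x^{-1}I-A)^{-1}$, the identification of the complementary principal minor with $\phi(\mathcal{G}\setminus\mathcal{S},x^{-1})$, and the Jacobi/Schur-complement identity $\det B_{\mathcal{S},\mathcal{S}}=\det M_{\bar{\mathcal{S}},\bar{\mathcal{S}}}/\det M$ (sign-free for principal index sets) assemble exactly into the stated formula, and reduce to the paper's Cramer's-rule proof of Lemma~\ref{lemma:antonia} when $s=1$. The paper itself gives no proof of this theorem---it only cites Godsil and remarks that the argument is ``similar'' to that of Lemma~\ref{lemma:antonia}---and your argument is precisely that generalization, with the invertibility caveat correctly handled over the field of rational functions.
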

\vspace{3mm}

We observe that if $\mathcal{S}$ consists of two nodes $i$ and $j$ then
\begin{equation}\label{eq:timbuktu}
{\mathrm{det}} W_{\mathcal{S},\mathcal{S}} (\mathcal{G},x) =
W_{ii}(\mathcal{G},x)W_{jj}(\mathcal{G},x)-W_{ij}(\mathcal{G},x)W_{ji}(\mathcal{G},x).
\end{equation}
and since $\mathcal{G}$ is an undirected graph, $W_{ij}(\mathcal{G},x) = W_{ji}(\mathcal{G},x)$.

For the case of the removal of two nodes, from Theorem~\ref{thm:nodes_removal} and eq.~\eqref{eq:timbuktu}, 
we obtain the following corollary.

\begin{corollary}\cite{Godsil1992}\label{cor:bamako}
For any graph~$\mathcal{G}$ we have:
\begin{align}
&x^{-1}W_{ij}(\mathcal{G},x^{-1})=\\
&\sqrt{\phi(\mathcal{G}\setminus i,x)\phi(\mathcal{G}\setminus j,x)-\phi(\mathcal{G},x)\phi(\mathcal{G}\setminus ij,x)}/\phi(\mathcal{G},x).\nonumber
\end{align}
\end{corollary}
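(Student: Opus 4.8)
The plan is to specialize the nodes-removal identity of Theorem~\ref{thm:nodes_removal} to the two-element set $\mathcal{S}=\{i,j\}$ and then disentangle the off-diagonal entry $W_{ij}$ from the diagonal ones using Lemma~\ref{lemma:antonia}. First I would write, with $s=2$,
\begin{equation*}
x^{-2}\,{\mathrm{det}}\,W_{\mathcal{S},\mathcal{S}}(\mathcal{G},x^{-1})=\phi(\mathcal{G}\setminus ij,x)/\phi(\mathcal{G},x),
\end{equation*}
and then expand the $2\times 2$ determinant by eq.~\eqref{eq:timbuktu} evaluated at $x^{-1}$, using the symmetry $W_{ij}(\mathcal{G},x^{-1})=W_{ji}(\mathcal{G},x^{-1})$ to replace the product $W_{ij}W_{ji}$ by $W_{ij}(\mathcal{G},x^{-1})^2$.

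Next I would rewrite the two diagonal terms. Since $x^{-2}W_{ii}(\mathcal{G},x^{-1})W_{jj}(\mathcal{G},x^{-1})=\bigl(x^{-1}W_{ii}(\mathcal{G},x^{-1})\bigr)\bigl(x^{-1}W_{jj}(\mathcal{G},x^{-1})\bigr)$, Lemma~\ref{lemma:antonia} turns each factor into $\phi(\mathcal{G}\setminus i,x)/\phi(\mathcal{G},x)$ and $\phi(\mathcal{G}\setminus j,x)/\phi(\mathcal{G},x)$, respectively. Substituting these together with the right-hand side above, the whole relation becomes a scalar equation in the single unknown $x^{-1}W_{ij}(\mathcal{G},x^{-1})$. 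Isolating that unknown squared and placing everything over the common denominator $\phi(\mathcal{G},x)^2$ yields
\begin{equation*}
\bigl(x^{-1}W_{ij}(\mathcal{G},x^{-1})\bigr)^2=\frac{\phi(\mathcal{G}\setminus i,x)\phi(\mathcal{G}\setminus j,x)-\phi(\mathcal{G},x)\phi(\mathcal{G}\setminus ij,x)}{\phi(\mathcal{G},x)^2},
\end{equation*}
and taking a square root produces the claimed formula.

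The main obstacle is not the algebra, which is routine, but the sign ambiguity introduced by the square root: the derivation determines $x^{-1}W_{ij}(\mathcal{G},x^{-1})$ only up to sign, whereas the statement records a single branch. I would therefore add the remark that $W_{ij}(\mathcal{G},x)=\sum_{\ell\ge 0}x^\ell A^\ell_{ij}$ has nonnegative coefficients, being the generating function for walks from $i$ to $j$ (cf.\ eq.~\eqref{eq:boh345}), so for small positive $x$ the quantity $x^{-1}W_{ij}(\mathcal{G},x^{-1})$ is nonnegative; this selects the $+\sqrt{\cdot}$ branch and justifies the sign convention in the displayed expression. One should also note in passing that the argument is an identity of rational functions (equivalently of formal power series), valid wherever $\phi(\mathcal{G},x)\neq 0$, so no analyticity issue arises beyond fixing this branch.
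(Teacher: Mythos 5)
Your derivation is correct and is exactly the route the paper intends: the paper states the corollary as an immediate consequence of Theorem~\ref{thm:nodes_removal} with $\mathcal{S}=\{i,j\}$ together with the $2\times2$ determinant expansion in eq.~\eqref{eq:timbuktu}, with the diagonal entries converted via Lemma~\ref{lemma:antonia}, and gives no further detail. Your extra remark fixing the branch of the square root is a welcome addition; just note that the walk series $W_{ij}(\mathcal{G},x^{-1})=\sum_{\ell\ge0}x^{-\ell}A^{\ell}_{ij}$ converges and is nonnegative for \emph{large} positive $x$ (namely $x>\rho(A)$), not small $x$, though as an identity of formal power series the nonnegativity of the coefficients already selects the positive branch.
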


The next corollary is extremely important since it
guarantees that independently of the number of nodes
we remove from the torus graph, we can restrict our study of the characteristic
polynomials to the case of the removal of two nodes.

\begin{corollary}\cite{Godsil1992}
If $\mathcal{C}$ is a subset of $\mathcal{V}(\mathcal{G})$ then $\phi(\mathcal{G}\setminus\mathcal{C},x)$
is determined by the polynomials $\phi(\mathcal{G}\setminus\mathcal{S},x)$
where $\mathcal{S}$ ranges over all subsets of $\mathcal{C}$ with at most two vertices.
\end{corollary}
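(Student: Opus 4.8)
The plan is to reduce the statement to Theorem~\ref{thm:nodes_removal} and then expand the resulting determinant entrywise. Writing $c=|\mathcal{C}|$ and applying Theorem~\ref{thm:nodes_removal} with $\mathcal{S}=\mathcal{C}$, I would start from
\begin{equation*}
\phi(\mathcal{G}\setminus\mathcal{C},x)=x^{-c}\,\phi(\mathcal{G},x)\,{\mathrm{det}}\,W_{\mathcal{C},\mathcal{C}}(\mathcal{G},x^{-1}),
\end{equation*}
which expresses the target polynomial through the $c\times c$ submatrix $W_{\mathcal{C},\mathcal{C}}$ of the walk generating function. The key observation is that, by the Leibniz expansion, ${\mathrm{det}}\,W_{\mathcal{C},\mathcal{C}}(\mathcal{G},x^{-1})$ is a fixed polynomial in the individual entries $W_{ij}(\mathcal{G},x^{-1})$ with $i,j\in\mathcal{C}$. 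Hence it suffices to show that each such entry is determined by the polynomials $\phi(\mathcal{G}\setminus\mathcal{S},x)$ with $|\mathcal{S}|\le 2$.

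For the diagonal entries this is immediate from Lemma~\ref{lemma:antonia}, which expresses $W_{ii}(\mathcal{G},x^{-1})$ through $\phi(\mathcal{G},x)$ and $\phi(\mathcal{G}\setminus i,x)$, i.e.\ the empty subset and the singleton $\{i\}$. For the off-diagonal entries I would invoke Corollary~\ref{cor:bamako}, which expresses $W_{ij}(\mathcal{G},x^{-1})$, up to a sign, through $\phi(\mathcal{G},x)$, $\phi(\mathcal{G}\setminus i,x)$, $\phi(\mathcal{G}\setminus j,x)$ and $\phi(\mathcal{G}\setminus ij,x)$ --- again only subsets of size at most two. Substituting all of these into the Leibniz expansion then exhibits $\phi(\mathcal{G}\setminus\mathcal{C},x)$ as an explicit function of the size-$\le 2$ data, which is exactly the claim.

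The step requiring care --- and which I expect to be the main obstacle --- is the square root in Corollary~\ref{cor:bamako}: it pins down $W_{ij}(\mathcal{G},x^{-1})$ only up to sign, and the determinant of a symmetric matrix is \emph{not} in general a function of its diagonal together with the squares of its off-diagonal entries (already for a $3\times 3$ matrix the cross term $2W_{ij}W_{ik}W_{jk}$ occurs, and knowing only $W_{ij}^2$, $W_{ik}^2$, $W_{jk}^2$ leaves its sign undetermined). To close this gap I would not work with $W_{ij}^2$ alone but fix the sign of each $W_{ij}$ itself, using the structural identity $W(\mathcal{G},x)=\sum_{k\ge 0}x^kA^k$. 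For $i,j$ in the same component the entry $W_{ij}(\mathcal{G},x)$ is a power series in $x$ with nonnegative coefficients whose lowest-order term is $x^{d(i,j)}$ times the positive number of geodesics between $i$ and $j$ (and $W_{ij}\equiv 0$ otherwise, with no ambiguity). Since the square of this leading term is read off directly from the expression in Corollary~\ref{cor:bamako}, the correct branch of the square root is forced, so every $W_{ij}$ is recovered unambiguously as the positive root and the determinant is thereby determined. Once each entry is so determined, the displayed identity yields $\phi(\mathcal{G}\setminus\mathcal{C},x)$ as a function of the polynomials indexed by subsets of $\mathcal{C}$ of size at most two, completing the argument; the two-vertex identity~\eqref{eq:timbuktu} is exactly the special case $c=2$ of this scheme.
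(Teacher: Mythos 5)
The paper states this corollary without proof (it is simply quoted from Godsil~\cite{Godsil1992}), so there is no internal proof to compare against; your argument is correct and is essentially the standard one from the cited source: reduce to $\det W_{\mathcal{C},\mathcal{C}}(\mathcal{G},x^{-1})$ via Theorem~\ref{thm:nodes_removal}, recover the diagonal entries from Lemma~\ref{lemma:antonia} and the squares of the off-diagonal entries from Corollary~\ref{cor:bamako}. You also correctly identify and resolve the one genuinely delicate step --- that the determinant is \emph{not} a function of the diagonal together with the $W_{ij}^2$ alone, because of cross terms such as $W_{ij}W_{ik}W_{jk}$ --- by noting that each $W_{ij}(\mathcal{G},x)=\sum_{\ell\ge 0}x^{\ell}(A^{\ell})_{ij}$ is a power series with nonnegative coefficients, which forces the branch of the square root and makes every entry, hence the determinant and $\phi(\mathcal{G}\setminus\mathcal{C},x)$, a well-defined function of the polynomials $\phi(\mathcal{G}\setminus\mathcal{S},x)$ with $|\mathcal{S}|\le 2$.
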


From Corollary~\ref{cor:bamako}, we obtain that
\begin{align}
&\phi(\mathcal{G}\setminus ij,x)=\\
&\frac{\phi(\mathcal{G}\setminus i,x)\phi(\mathcal{G}\setminus j,x)}{\phi(\mathcal{G},x)}
-\phi(\mathcal{G},x)\left(x^{-1}W_{ij}(\mathcal{G},x^{-1})\right)^2.\nonumber
\end{align}

This implies that the only unknowns to compute
the removal of a set of nodes over the torus,
in particular for a set of two nodes,
are given by $W_{ij}$, which 
can be represented as the
number of walks between node~$i$
and node~$j$.
The following theorem provides us
a way to compute the values~$W_{ij}$.

\begin{theorem}\cite{Godsil1997}
For any two vertices $i$ and $j$ in the graph~$\mathcal{G}$ and any
non-negative integer~$\ell$, we have
that the number of walks of length~$\ell$,
denoted $W_{ij}(\mathcal{G},\ell)$,
is given by
\begin{equation*}
W_{ij}(\mathcal{G},\ell)=\sum_\theta u_\theta(i)^T\theta^\ell u_\theta(j),
\end{equation*}
where the sum is over all eigenvalues $\theta$ of $\mathcal{G}$
and $u_\theta(i)$ denotes the $i$-th row of $U_\theta$
where $U_\theta$ is the 
matrix
whose columns form an orthonormal basis for the eigenspace
belonging to $\theta$.
\end{theorem}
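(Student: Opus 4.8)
The plan is to identify the combinatorial quantity $W_{ij}(\mathcal{G},\ell)$ with an entry of a matrix power and then diagonalize, exploiting that the adjacency matrix $A$ is real and symmetric. First I would invoke the result recalled in Section~\ref{sec:preliminaries}, namely that $A^\ell_{ij}$ equals the number of walks of length~$\ell$ from vertex~$i$ to vertex~$j$, so that $W_{ij}(\mathcal{G},\ell)=A^\ell_{ij}$. The entire task thus reduces to expanding the $(i,j)$ entry of $A^\ell$ in terms of the eigendata of~$A$.

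Next I would apply the spectral theorem. Since $A$ is a real symmetric matrix, it admits an orthonormal basis of eigenvectors; collecting, for each distinct eigenvalue $\theta$, the basis vectors spanning its eigenspace as the columns of a matrix $U_\theta$, orthonormality gives $U_\theta^T U_{\theta'}=\delta_{\theta\theta'}I$ together with the completeness relation $\sum_\theta U_\theta U_\theta^T = I$. Writing $P_\theta = U_\theta U_\theta^T$ for the orthogonal projection onto the eigenspace of $\theta$, the spectral decomposition reads $A = \sum_\theta \theta\, P_\theta$.

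The key step is to raise this decomposition to the $\ell$-th power. Because distinct eigenspaces of a symmetric matrix are mutually orthogonal, the projections are orthogonal idempotents, i.e. $P_\theta P_{\theta'} = \delta_{\theta\theta'} P_\theta$; consequently all cross terms in the product $\bigl(\sum_\theta \theta\, P_\theta\bigr)^\ell$ vanish and one obtains $A^\ell = \sum_\theta \theta^\ell P_\theta = \sum_\theta \theta^\ell U_\theta U_\theta^T$. I expect this bookkeeping---verifying that the cross terms drop out and that the grouping of eigenvectors by eigenvalue is consistent---to be the only delicate point, though it is entirely routine given the symmetry of~$A$.

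Finally I would read off the $(i,j)$ entry. Since $\theta^\ell$ is a scalar, $(U_\theta U_\theta^T)_{ij} = u_\theta(i)^T u_\theta(j)$, where $u_\theta(i)$ denotes the $i$-th row of $U_\theta$, and therefore
\[
W_{ij}(\mathcal{G},\ell) = A^\ell_{ij} = \sum_\theta \theta^\ell\, u_\theta(i)^T u_\theta(j) = \sum_\theta u_\theta(i)^T \theta^\ell u_\theta(j),
\]
which is the claimed formula.
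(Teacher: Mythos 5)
Your proof is correct: identifying $W_{ij}(\mathcal{G},\ell)$ with $A^\ell_{ij}$, writing the spectral decomposition $A=\sum_\theta\theta\,U_\theta U_\theta^T$, and using $P_\theta P_{\theta'}=\delta_{\theta\theta'}P_\theta$ to get $A^\ell=\sum_\theta\theta^\ell U_\theta U_\theta^T$ is exactly the standard argument. The paper itself gives no proof here---it simply cites Godsil---and your argument is the one found in that reference, so there is nothing to compare beyond noting that you have correctly supplied the omitted details.
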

\vspace{3mm}

From the previous theorem, we conclude that
the walk generating function can be written as
\begin{equation*}
W_{ij}(\mathcal{G},x^{-1})=\sum_{\ell\ge0} x^{-\ell}\sum_\theta u_\theta(i)^T\theta^\ell u_\theta(j).
\end{equation*}

In order to compute the walk generating function of the two-dimensional torus,
we recall that the ring of length~$m$ has eigenvectors:
\begin{align*}
x_k(u) = \sin(2\pi k u/m)\quad\textrm{and}\quad y_k(u) = \cos(2\pi k u/m).
\end{align*}
for~$k\le m/2$.
Both of these eigenvectors have
eigenvalue \mbox{$\lambda_k=2-2\cos(2\pi k/m)$}.
Here $x_0$ should be ignored and $y_0$
is the all $1$s vector. If $m$ is even, then also
$x_{m/2}$ should be ignored.
We denote the matrix of eigenvectors of the ring
of length~$m$ as $V$.
Then the two-dimensional torus has matrix of eigenvectors
$Z=V\otimes V$ where $\otimes$ denotes the Kronecker product.
Since the matrix is symmetric the eigenvectors form
an orthogonal basis and we may normalize its eigenvectors to obtain
an orthonormal basis. We call this orthonormal basis~$\hat{Z}$.
We consider the matrix of eigenvalues of the two-dimensional torus,
denoted by $\Lambda$, which are given by
\begin{align*}
\lambda_i+\mu_j
={}&2\cos(2\pi i/m)+2\cos(2\pi j/m)\ \forall 1\le i,j\le m.
\end{align*}
From here we obtain the number of closed walks
of the two-dimensional torus between node~$i$ and node~$j$.

Following a similar approach to the previous case of
the removal of one node,
Godsil~\cite{Godsil1992} is able to prove
the following theorem which deals with
the removal of one edge of the graph.

\begin{theorem}[Edge removal]\cite{Godsil1992}
Let $e=\{i,j\}$ be an edge in $\mathcal{G}$. Then
\begin{align}\label{eq:edge_removal}
\phi(\mathcal{G},x)={}&
\phi(\mathcal{G}\setminus e,x)-\phi(\mathcal{G}\setminus ij,x)\\
{}&-2\sqrt{\phi(\mathcal{G}\setminus i,x)\phi(\mathcal{G}\setminus j,x)-\phi(\mathcal{G},x)\phi(\mathcal{G}\setminus ij,x)}.\nonumber
\end{align}
\end{theorem}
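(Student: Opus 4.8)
The plan is to realize $\mathcal{G}\setminus e$ as a \emph{rank-two perturbation} of $\mathcal{G}$ and then to evaluate the resulting determinant correction with the identities already established for the walk generating function. Writing $A=A(\mathcal{G})$ and letting $e_i,e_j$ be the standard basis vectors, deleting the edge $e=\{i,j\}$ only zeroes the two symmetric entries $A_{ij}=A_{ji}=1$, so $A(\mathcal{G}\setminus e)=A-(e_ie_j^T+e_je_i^T)$. I would factor the perturbation as $e_ie_j^T+e_je_i^T=USU^T$ with $U=[\,e_i\ e_j\,]$ an $n\times 2$ matrix and $S=\left(\begin{smallmatrix}0&1\\1&0\end{smallmatrix}\right)$. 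Treating $x$ as an indeterminate (so that $\phi(\mathcal{G},x)\neq 0$, the final polynomial identity then extending to all $x$), the matrix determinant lemma gives
\[
\phi(\mathcal{G}\setminus e,x)=\det\!\big(xI-A+USU^T\big)=\phi(\mathcal{G},x)\,\det\!\big(I_2+S\,U^T(xI-A)^{-1}U\big),
\]
which reduces the whole problem to understanding the $2\times 2$ block of $(xI-A)^{-1}$ indexed by $i$ and $j$.

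Second, I would identify that block with the quantities the paper already controls. Setting $R=(xI-A)^{-1}=x^{-1}W(\mathcal{G},x^{-1})$, Lemma~\ref{lemma:antonia} gives the diagonal entries $R_{ii}=\phi(\mathcal{G}\setminus i,x)/\phi(\mathcal{G},x)$ and $R_{jj}=\phi(\mathcal{G}\setminus j,x)/\phi(\mathcal{G},x)$, while Corollary~\ref{cor:bamako} gives the off-diagonal entry $R_{ij}=x^{-1}W_{ij}(\mathcal{G},x^{-1})=\sqrt{\phi(\mathcal{G}\setminus i,x)\phi(\mathcal{G}\setminus j,x)-\phi(\mathcal{G},x)\phi(\mathcal{G}\setminus ij,x)}\big/\phi(\mathcal{G},x)$. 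A short computation shows $\det(I_2+S\,U^T R\,U)=(1+R_{ij})^2-R_{ii}R_{jj}$; substituting the three expressions above and multiplying through by $\phi(\mathcal{G},x)$ turns $\phi(\mathcal{G}\setminus e,x)$ into a combination of $\phi(\mathcal{G},x)$, $\phi(\mathcal{G}\setminus ij,x)$ and exactly the radical appearing in~\eqref{eq:edge_removal}, after which one solves for $\phi(\mathcal{G},x)$. Here it is crucial (and elementary) that deleting an endpoint already deletes $e$, i.e. $(\mathcal{G}\setminus e)\setminus i=\mathcal{G}\setminus i$ and $(\mathcal{G}\setminus e)\setminus ij=\mathcal{G}\setminus ij$, which is what lets the $\mathcal{G}$-quantities and the $\mathcal{G}\setminus e$-quantity live in the same identity.

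I expect the main obstacle to be bookkeeping rather than conceptual: pinning down the correct \emph{branch} of the square root in Corollary~\ref{cor:bamako} (the off-diagonal walk generating function is a power series with nonnegative coefficients, so its sign is forced, and this in turn fixes the sign of the cross term $2R_{ij}$ that produces the factor $2$ in~\eqref{eq:edge_removal}), together with tracking the signs of the constant term $\phi(\mathcal{G}\setminus ij,x)$ and of the radical through the final rearrangement. Because these determinant-lemma manipulations are error-prone, I would validate the resulting identity on the smallest nontrivial case $\mathcal{G}=K_2$ — where $\phi(\mathcal{G},x)=x^2-1$, $\phi(\mathcal{G}\setminus e,x)=x^2$, $\phi(\mathcal{G}\setminus i,x)=\phi(\mathcal{G}\setminus j,x)=x$ and $\phi(\mathcal{G}\setminus ij,x)=1$ — before trusting the general signs.
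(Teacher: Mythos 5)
Your rank-two-perturbation strategy is sound, and there is nothing in the paper to compare it against: the theorem is stated with a citation to Godsil and no proof is given. But if you push your computation to the end it does not yield the identity as printed, and this is worth being precise about. Writing $\phi=\phi(\mathcal{G},x)$, $\phi_i=\phi(\mathcal{G}\setminus i,x)$, $\phi_{ij}=\phi(\mathcal{G}\setminus ij,x)$ and $\Delta=\sqrt{\phi_i\phi_j-\phi\,\phi_{ij}}$, your determinant $\bigl(1+R_{ij}\bigr)^2-R_{ii}R_{jj}$ evaluates to
\begin{equation*}
\phi(\mathcal{G}\setminus e,x)=\phi+2\Delta+\frac{\Delta^2-\phi_i\phi_j}{\phi}=\phi+2\Delta-\phi_{ij},
\end{equation*}
that is, $\phi(\mathcal{G},x)=\phi(\mathcal{G}\setminus e,x)+\phi(\mathcal{G}\setminus ij,x)-2\Delta$, with a \emph{plus} sign on $\phi(\mathcal{G}\setminus ij,x)$ where the statement has a minus. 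Your own $K_2$ test settles which sign is right: there $\Delta=\sqrt{x\cdot x-(x^2-1)\cdot 1}=1$, so the printed identity would read $x^2-1=x^2-1-2$, which is false, while the plus-sign version gives $x^2-1=x^2+1-2$. (The triangle minus an edge gives the same verdict.) The statement as transcribed in the paper is therefore incorrect; the slip comes from conflating the radical $\Delta$, which by Godsil's Christoffel--Darboux identity equals $\sum_P\phi(\mathcal{G}\setminus P,x)$ over all paths $P$ from $i$ to $j$ (including the single edge $e$, which contributes $\phi_{ij}$), with the cycle sum $\sum_{Z\ni e}\phi(\mathcal{G}\setminus Z,x)=\Delta-\phi_{ij}$ appearing in the classical Schwenk/Godsil edge-deletion formula $\phi(\mathcal{G},x)=\phi(\mathcal{G}\setminus e,x)-\phi(\mathcal{G}\setminus ij,x)-2\sum_{Z\ni e}\phi(\mathcal{G}\setminus Z,x)$.

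Apart from this, your argument is essentially complete once you nail down the two points you already anticipated: the branch of the square root is forced because $x^{-1}W_{ij}(\mathcal{G},x^{-1})$ has nonnegative coefficients (equivalently, $\Delta$ is the path-sum polynomial, which is positive for large $x$), and the matrix determinant lemma is applied where $xI-A$ is invertible and then extended to an identity of polynomials. So: right method, correctly reduced to Lemma~2 and Corollary~1, but it proves a corrected identity rather than the one stated --- you should flag the sign error rather than try to force the algebra to match the printed formula.
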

\vspace{3mm}

We observe that all the previous terms
in eq.~\eqref{eq:edge_removal}
except $\phi(\mathcal{G}\setminus e,x)$ are known.
This implies that the characteristic polynomial
of the two-dimensional torus with one edge removed
is completely determined by the previous expressions
for the torus with one node removed and two nodes removed.
 
\begin{figure*}[ht]
  \begin{minipage}[b]{0.49\linewidth}
    \centering
    \includegraphics[width=\linewidth]{./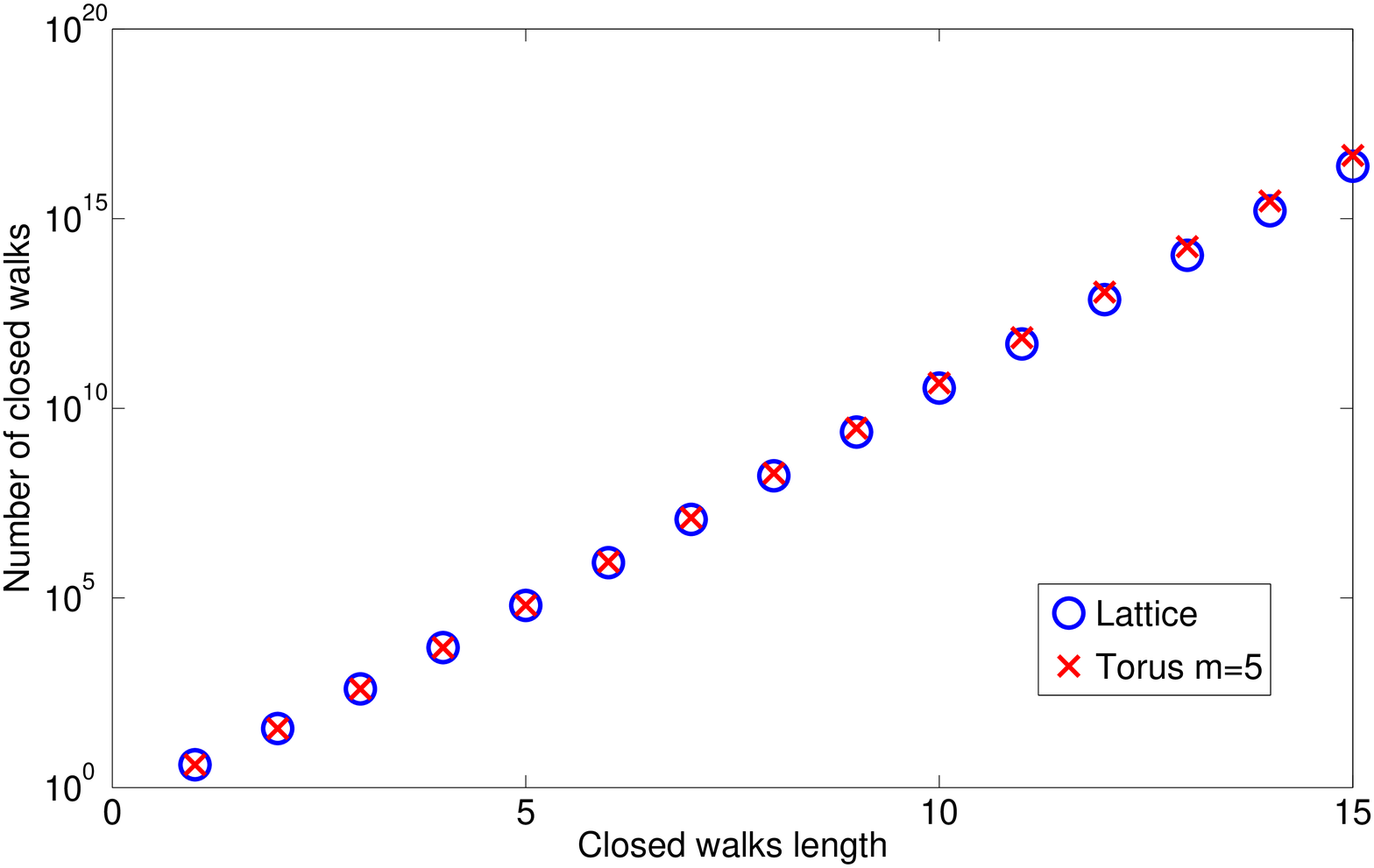}
    \caption{Number of closed walks vs closed walks length.}
    \label{fig:timbuktu1}
  \end{minipage}
  \hspace{0.5cm}
  \begin{minipage}[b]{0.49\linewidth}
    \centering
    \includegraphics[width=\linewidth]{./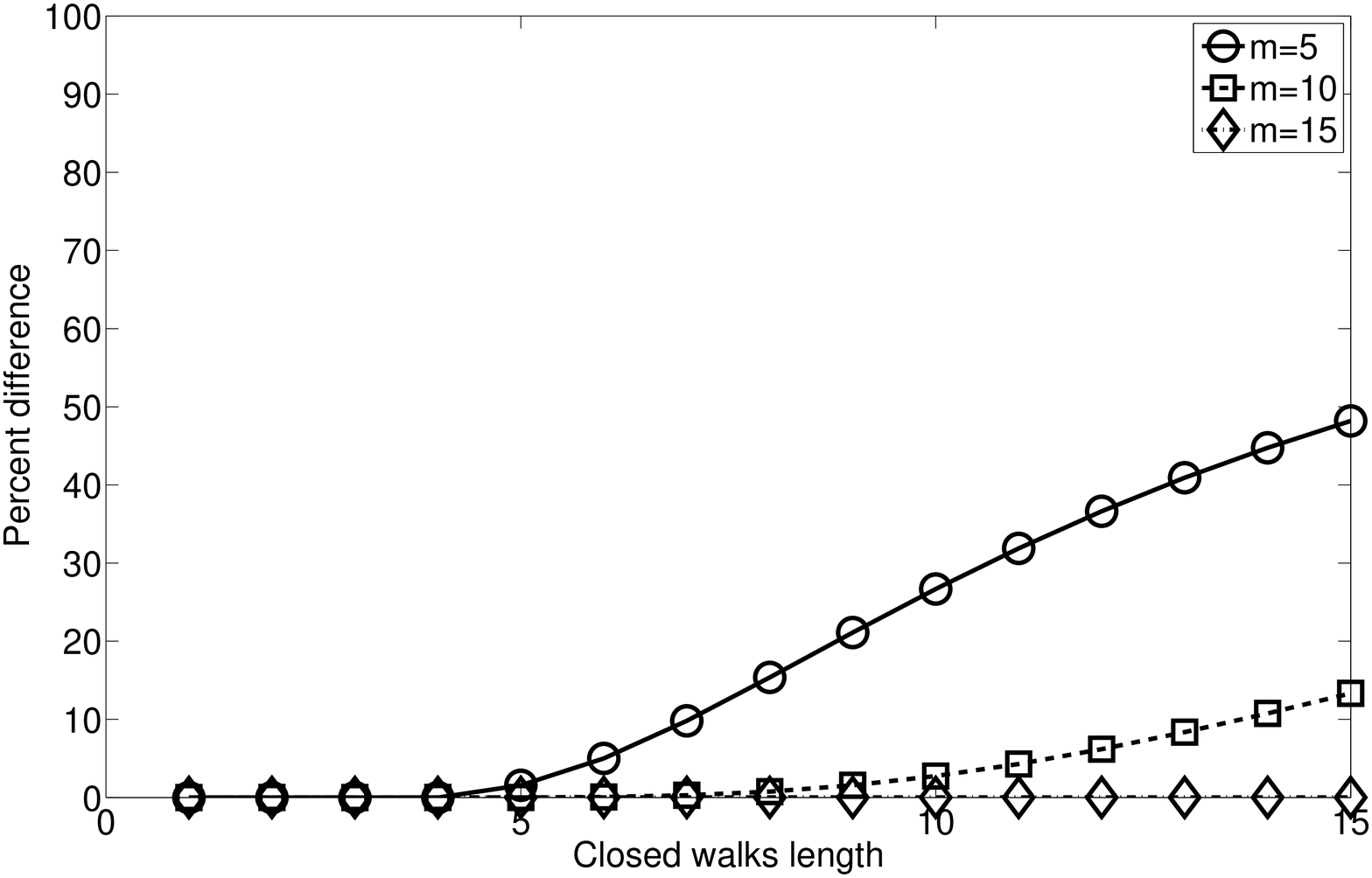}
    \caption{Percent difference on the number of closed walks between the torus and the lattice vs closed walks length.}
    \label{fig:timbuktu2}
  \end{minipage}
\end{figure*}

\begin{figure*}[ht]
  \begin{minipage}[b]{0.49\linewidth}
    \centering
    \includegraphics[width=\linewidth]{./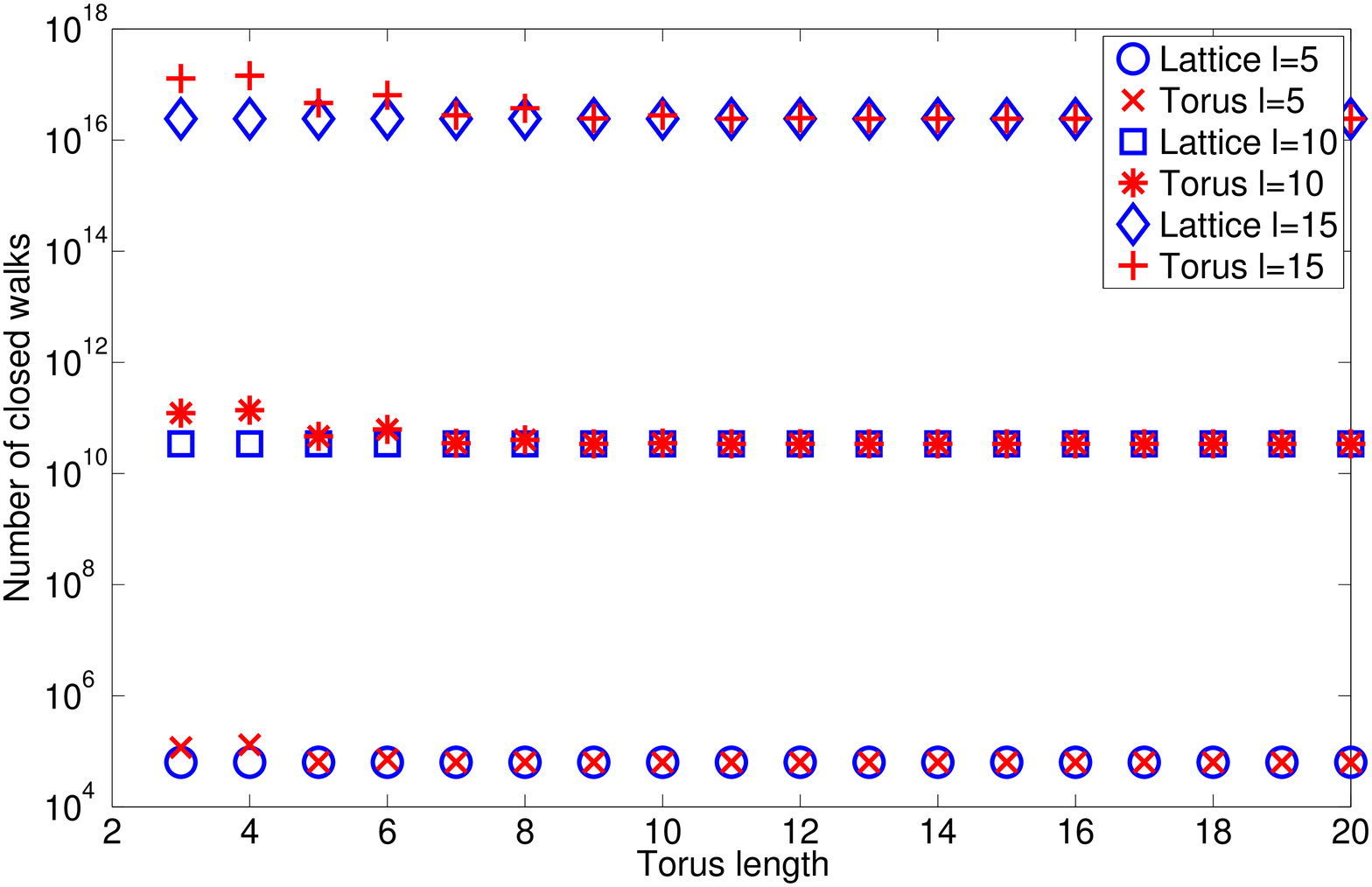}
    \caption{Number of closed walks vs torus length.}
    \label{fig:timbuktu3}
  \end{minipage}
  \hspace{0.5cm}
  \begin{minipage}[b]{0.49\linewidth}
    \centering
    \includegraphics[width=\linewidth]{./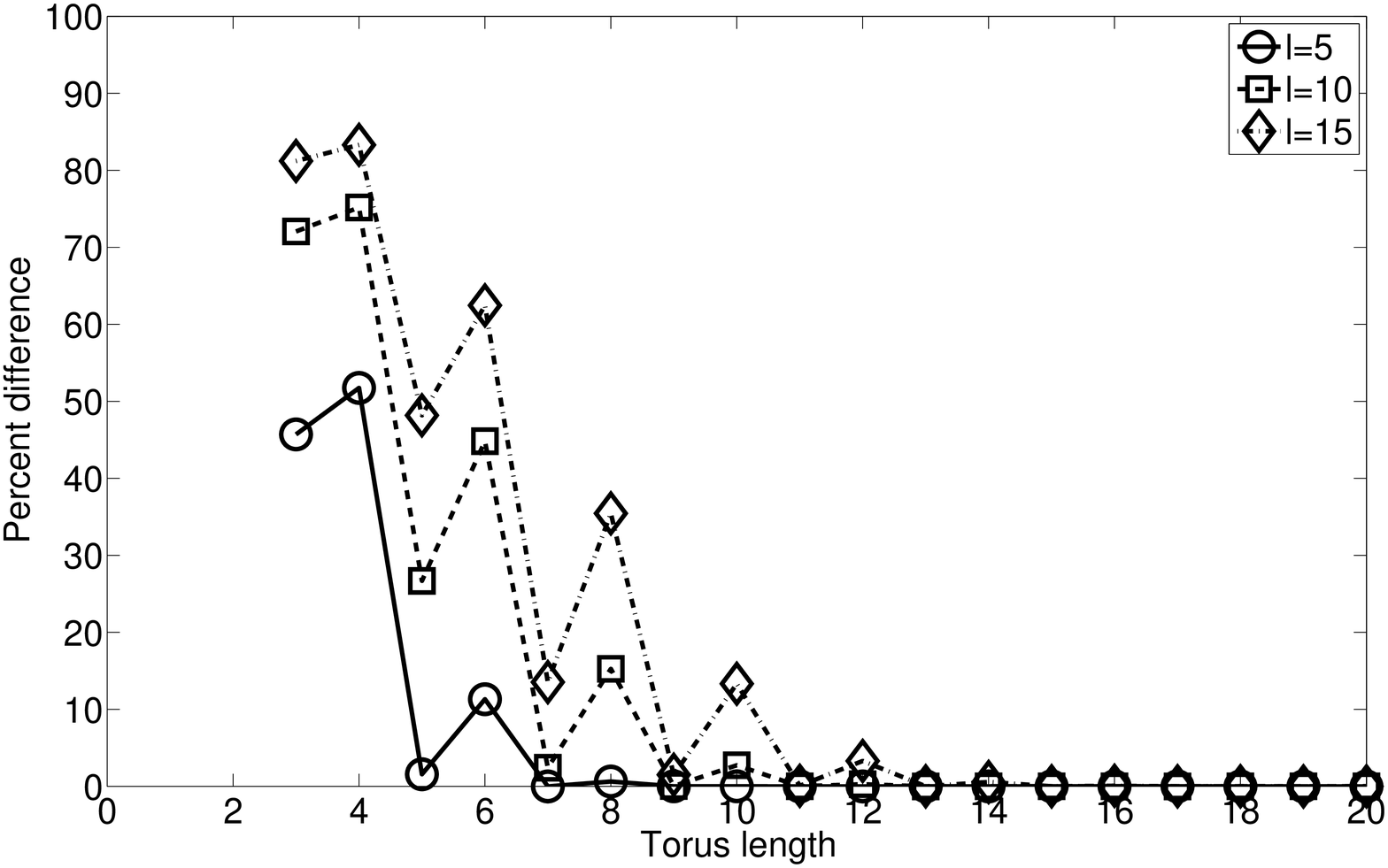}
    \caption{Percent difference on the number of closed walks between the torus and the lattice vs torus length.}
    \label{fig:timbuktu4}
  \end{minipage}
\end{figure*}

\section{Lower bounds for the two-dimensional torus networks}\label{sec:lower}

In this section, we will compare
the previous expressions with the case of the
infinite lattice. We notice that this scenario can
be seen as the case where the length of the torus goes to infinity.

\begin{proposition}\label{prop:bamako}
The diagonal entries of the walk generating function over the lattice~$\mathcal{L}^2$ are given by
\begin{equation}\label{eq:doumbodo}
W_{ii}(\mathcal{L}^2,x)
=\sum_{\ell\ge 0}x^{2\ell}\binom{2\ell}{\ell}^2.
\end{equation}
\end{proposition}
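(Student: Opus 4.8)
The plan is to read off $W_{ii}$ directly from its power-series definition and reduce everything to a lattice walk-counting problem. By the same expansion used in eq.~\eqref{eq:boh345}, we have $W_{ii}(\mathcal{L}^2,x)=\sum_{\ell\ge0}x^\ell A^\ell_{ii}$, where $A^\ell_{ii}$ is the number of closed walks of length~$\ell$ based at the (arbitrary, by homogeneity of~$\mathbb{Z}^2$) vertex~$i$. Since each step on the square lattice changes the parity of the coordinate sum, every closed walk has even length, so $A^{2\ell+1}_{ii}=0$ and only even powers of~$x$ survive; it therefore remains to show $A^{2\ell}_{ii}=\binom{2\ell}{\ell}^2$.

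First I would rotate coordinates. Writing a vertex of $\mathbb{Z}^2$ as $(a,b)$ and setting $u=a+b$, $v=a-b$, the four lattice moves $(\pm1,0)$ and $(0,\pm1)$ correspond precisely to the four independent sign choices $(u,v)\mapsto(u\pm1,v\pm1)$. Concretely, $(+1,0)\mapsto(+,+)$, $(-1,0)\mapsto(-,-)$, $(0,+1)\mapsto(+,-)$, and $(0,-1)\mapsto(-,+)$, which is a bijection between the step alphabet of the square lattice and the product alphabet $\{\pm1\}\times\{\pm1\}$. Hence a length-$2\ell$ walk on $\mathbb{Z}^2$ is exactly a pair consisting of a length-$2\ell$ $\pm1$-step walk in the $u$-coordinate and an independent one in the $v$-coordinate, and the walk is closed if and only if both component walks return to their origin.

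With this decomposition the count factorizes. A $\pm1$-step walk of length~$2\ell$ returns to the origin iff it uses exactly $\ell$ steps $+1$ and $\ell$ steps $-1$, and there are $\binom{2\ell}{\ell}$ such sequences; by independence of the two coordinates we obtain $A^{2\ell}_{ii}=\binom{2\ell}{\ell}\cdot\binom{2\ell}{\ell}=\binom{2\ell}{\ell}^2$, and substituting into the series yields exactly eq.~\eqref{eq:doumbodo}. As a cross-check, a direct count in the original coordinates classifies a closed walk of length~$2\ell$ by its numbers $a$ of horizontal and $b=\ell-a$ of vertical excursions, giving $\sum_{a=0}^{\ell}\tfrac{(2\ell)!}{(a!)^2((\ell-a)!)^2}=\binom{2\ell}{\ell}\sum_{a=0}^{\ell}\binom{\ell}{a}^2$, which collapses to $\binom{2\ell}{\ell}^2$ by Vandermonde's identity.

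The only real care needed is in the rotation step: I must verify that the induced map on step alphabets is a genuine bijection, so that no walks are double-counted or missed, and that the $u$- and $v$-increments are chosen independently at each step, which is precisely what makes the count factor as a product. Everything else---the parity remark eliminating odd lengths and the elementary $\binom{2\ell}{\ell}$ formula for one-dimensional returns---is routine, so I expect the bijective bookkeeping to be the main (and essentially only) point requiring attention.
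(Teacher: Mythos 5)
Your proof is correct, and your primary argument is genuinely different from (and more self-contained than) the paper's. The paper's proof is essentially a citation: it classifies a closed walk of length $2\ell$ by the number $i$ of steps taken in one coordinate direction, writes down the resulting multinomial sum $\sum_{i=0}^{\ell}\frac{(2\ell)!}{i!\,i!\,(\ell-i)!\,(\ell-i)!}$, and asserts --- pointing to OEIS A002894 --- that this equals $\binom{2\ell}{\ell}^2$, without evaluating the sum or spelling out the parity remark beyond a footnote. Your main route instead rotates to diagonal coordinates $u=a+b$, $v=a-b$, under which the four lattice steps biject with $\{\pm1\}\times\{\pm1\}$, so the count of closed walks factorizes on the nose into a product of two one-dimensional return counts, each $\binom{2\ell}{\ell}$; this gives the closed form with no summation identity needed. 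Your ``cross-check'' is in fact the paper's argument carried to completion: it produces the same multinomial sum and then actually closes it via $\binom{2\ell}{\ell}\sum_{a}\binom{\ell}{a}^2=\binom{2\ell}{\ell}^2$ (Vandermonde), which is the step the paper outsources to the OEIS reference. The rotation argument buys elegance and self-containment; the paper's version buys brevity at the cost of an unproved (though standard) identity. The bijection you flag as the delicate point checks out: the four sign patterns $(+,+),(-,-),(+,-),(-,+)$ are hit exactly once each by the four moves, and a walk closes iff both diagonal components return to the origin, so nothing is double-counted.
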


\begin{proof}
The number of closed walks of length
$2\ell$ 
over the two-dimensional lattice\footnote{
We notice that to consider even length is not restrictive since every
closed walk over the lattice network has to have even length.} $\mathcal{L}^2$ is equal to (see e.g. EIS A002894):
\begin{equation*}
\sum_{i=0}^{\ell} \frac{(2\ell)!}{i!i!\left(\ell-i\right)!\left(\ell-i\right)!}=\binom{2\ell}{\ell}^2.
\end{equation*}
\end{proof}

In the following, we give a lower bound
to the walk generating function of the two-dimensional torus
with one removed node.

\begin{proposition}\label{prop:stirling1}
For the two-dimensional torus, we have
\begin{align*}
x^{-1}W_{ii}&(\mathcal{T}_m^2,x^{-1})
\ge
\frac{1}{x}\frac{4\pi}{e^4}\sum_{\ell\ge0}\frac{1}{\ell}\left(\frac{4}{x}\right)^{2\ell}.
\end{align*}
\end{proposition}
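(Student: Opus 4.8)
The plan is to start from the series representation of the walk generating function and to bound the torus closed-walk counts below by the corresponding lattice counts, which are known in closed form from Proposition~\ref{prop:bamako}; a Stirling estimate on the resulting central binomial coefficients then produces both the constant $4\pi/e^4$ and the factor $1/\ell$. Concretely, by Proposition~\ref{prop:wgf_timbuktu} (equation~\eqref{eq:boh345}) I would write $W_{ii}(\mathcal{T}^2_m,x^{-1})=\sum_{\ell\ge0}x^{-\ell}A^\ell_{ii}$, where $A^\ell_{ii}$ is the number of closed walks of length $\ell$ based at $i$. For $x>0$ every term is nonnegative, so discarding the $\ell=0$ term and all the odd-length terms only decreases the sum and gives $W_{ii}(\mathcal{T}^2_m,x^{-1})\ge\sum_{\ell\ge1}x^{-2\ell}A^{2\ell}_{ii}$.

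The key step is the comparison $A^{2\ell}_{ii}(\mathcal{T}^2_m)\ge\binom{2\ell}{\ell}^2$. I would argue this combinatorially: a closed walk is determined by its sequence of unit steps in the four coordinate directions, and such a step sequence is a closed walk on $\mathbb{Z}^2$ precisely when its total displacement is $0$, and a closed walk on the torus precisely when its displacement is $0\pmod{m}$. Since vanishing displacement in $\mathbb{Z}^2$ forces vanishing displacement modulo $m$, every closed lattice walk based at $i$ is also a closed torus walk based at $i$; this inclusion injects lattice closed walks into torus closed walks, so $A^{2\ell}_{ii}(\mathcal{T}^2_m)$ is at least the lattice count $\binom{2\ell}{\ell}^2$ of Proposition~\ref{prop:bamako} (with strict inequality once $\ell$ is long enough for a walk to wind around the torus). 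Hence $W_{ii}(\mathcal{T}^2_m,x^{-1})\ge\sum_{\ell\ge1}x^{-2\ell}\binom{2\ell}{\ell}^2$.

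It then remains to bound $\binom{2\ell}{\ell}$ from below. I would use the two-sided Stirling estimates $\sqrt{2\pi n}\,(n/e)^n\le n!\le e\sqrt{n}\,(n/e)^n$, valid for $n\ge1$, applying the lower bound to $(2\ell)!$ and the upper bound to $(\ell!)^2$. A short computation yields $\binom{2\ell}{\ell}=\tfrac{(2\ell)!}{(\ell!)^2}\ge\tfrac{2\sqrt{\pi}}{e^2}\tfrac{4^\ell}{\sqrt{\ell}}$, and squaring gives $\binom{2\ell}{\ell}^2\ge\tfrac{4\pi}{e^4}\tfrac{16^\ell}{\ell}$. Substituting and using $16^\ell x^{-2\ell}=(4/x)^{2\ell}$ produces $W_{ii}(\mathcal{T}^2_m,x^{-1})\ge\tfrac{4\pi}{e^4}\sum_{\ell\ge1}\tfrac{1}{\ell}(4/x)^{2\ell}$; multiplying through by $x^{-1}$ gives the claimed inequality.

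Two points deserve care, and I regard the first as the genuine obstacle. The comparison step is where all the content lies: one must justify that counting by step sequences is legitimate and that the lattice-to-torus map really is an injection on closed walks of every length; this is exactly the statement that the infinite lattice furnishes a lower bound (``the torus has at least as many local return walks''). Second, the $\ell=0$ term is singular on the right-hand side, since the summand $\tfrac{1}{\ell}$ is undefined at $\ell=0$, so the sum must be read as starting at $\ell=1$; this is harmless, because dropping $\ell=0$ only weakens the lower bound and the Stirling estimate is in any case valid only for $\ell\ge1$. Finally, I would note that the argument lives in the convergence regime $x>4=\rho(\mathcal{T}^2_m)$, where both series converge, or else interpret the inequality coefficient-by-coefficient as formal power series.
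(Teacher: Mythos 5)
Your proposal is correct and follows essentially the same route as the paper's own proof: bound the torus closed-walk counts below by the lattice counts $\binom{2\ell}{\ell}^2$ via the lattice-to-torus injection, then apply the two-sided Stirling estimate to get $\binom{2\ell}{\ell}\ge\frac{2\sqrt{\pi}}{e^2}\frac{4^\ell}{\sqrt{\ell}}$ and hence the constant $4\pi/e^4$ and the factor $1/\ell$. Your added care about the singular $\ell=0$ term and the convergence regime $x>4$ are points the paper glosses over, but they do not change the argument.
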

\vspace{3mm}

\begin{proof}
The Stirling approximation formula tell us that
\begin{equation*}
\sqrt{2\pi \ell}\left(\frac{\ell}{e}\right)^\ell\le \ell! \le e\sqrt{\ell}\left(\frac{\ell}{e}\right)^\ell.
\end{equation*}

Then applying it to the binomial coefficient we get the lower bound
\begin{equation}\label{eq:doumbodo_731}
\binom{2\ell}{\ell}
=\frac{(2\ell)!}{\ell!\ell!}\ge
\frac{\sqrt{2\pi2 \ell}\left(\frac{2\ell}{e}\right)^{2\ell}}
{e\sqrt{\ell}\left(\frac{\ell}{e}\right)^{\ell e}\sqrt{\ell}\left(\frac{\ell}{e}\right)^\ell}
=\frac{\sqrt{4\pi}}{e^2}\frac{1}{\sqrt{\ell}}4^\ell.
\end{equation}

We know that the number of walks over the torus network
is greater than the number of walks over the lattice network
since every walk over the lattice network can be mapped into
a walk over the torus network. We notice that the contrary
is not true in general since for example following~$m+1$ steps in
the same direction will be a closed walk over the torus of length~$m$
but not a closed walk over the lattice.
Therefore, from eq.~\eqref{eq:doumbodo} and eq.~\eqref{eq:doumbodo_731}, we have that
\begin{align*}
x^{-1}W_{ii}&(\mathcal{T}_m^2,x^{-1})
\ge x^{-1}W_{ii}(\mathcal{L}^2,x^{-1})=\\
&= x^{-1}\sum_{\ell\ge 0}x^{-2\ell}\binom{2\ell}{\ell}^2
\ge\frac{1}{x}\frac{4\pi}{e^4}\sum_{\ell\ge0}\frac{1}{\ell}\left(\frac{4}{x}\right)^{2\ell}.
\end{align*}
\end{proof}

Proposition \ref{prop:stirling1} allow us
to obtain also a lower bound
for the characteristic polynomial
of the two-dimensional torus with one removed node.

The lower bound is given by
\begin{align*}
&\phi(\mathcal{T}^2_m\setminus i,x)
\ge
\phi(\mathcal{T}^2_m,x)\frac{1}{x}\frac{4\pi}{e^4}\sum_{\ell\ge0}\frac{1}{\ell}\left(\frac{4}{x}\right)^{2\ell}\textrm{where}\\
&\phi(\mathcal{T}^2_m,x)
=
\prod_{1\le i,j\le m}\left(x-2\cos(2\pi i/m)-2\cos(2\pi j/m)\right).
\end{align*}

The case of the number of walks between any two nodes
over the two-dimensional torus of length~$m$ is more involved.
We define the following function which
allow us to map each node on the two-dimensional torus of length~$m$
to a node on the lattice~$\mathbb{Z}\times\mathbb{Z}$.
The mapping function $h:\{1,\ldots,m^2\}\rightarrow\mathbb{Z}\times\mathbb{Z}$ is as follows:
\begin{gather*}
h(u)=(x_u,y_u)\quad\textrm{where}\\
x_u\equiv u-1\pmod{m}\quad\textrm{and}\quad
y_u=\left\lfloor \frac{u}{m}\right\rfloor.
\end{gather*}

\begin{proposition}\label{prop:thedang}
The number of walks of length $\ell$ over the two-dimensional lattice
from the origin~$(0,0)$
to~$(a,b)$ ($\ell+a+b$ is even) is equal to
\begin{equation}\label{eq:bamako231}
\sum_{i=b}^{b+(\ell-b)/2} \frac{\ell !}{i!(i-b)!\left(\frac{\ell+a+b}{2}-i\right)!\left(\frac{\ell+b-a}{2}-i\right)!}.
\end{equation}
\end{proposition}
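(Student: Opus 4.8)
The plan is to count the walks directly by classifying each step according to its direction. A walk of length $\ell$ on the lattice $\mathbb{Z}\times\mathbb{Z}$ is an ordered sequence of $\ell$ unit steps, each being one of the four moves East $(+1,0)$, West $(-1,0)$, North $(0,+1)$ or South $(0,-1)$. First I would record, for a given walk, how many steps of each type it uses, say $r$ East, $l$ West, $u$ North and $d$ South steps. The endpoint is pinned down by two displacement equations, $r-l=a$ (net horizontal) and $u-d=b$ (net vertical), while the length constraint forces $r+l+u+d=\ell$.

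Next I would count, for fixed type-counts $(r,l,u,d)$, the number of walks realizing them: since such a walk is just an arrangement of these labeled steps, this count is the multinomial coefficient $\binom{\ell}{r,l,u,d}=\frac{\ell!}{r!\,l!\,u!\,d!}$. It then remains to sum this coefficient over all nonnegative integer solutions of the three linear constraints. I would eliminate variables by setting the summation index $i=u$; then $d=i-b$, and solving the resulting two-by-two system for the horizontal counts gives $r=\frac{\ell+a+b}{2}-i$ and $l=\frac{\ell+b-a}{2}-i$. Substituting these into the multinomial coefficient reproduces the summand of eq.~\eqref{eq:bamako231} exactly. The parity hypothesis that $\ell+a+b$ is even is precisely what guarantees that $r$ and $l$, hence all four counts, are integers, so that the sum is well defined.

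The final step is to determine the admissible range of $i$, and this is where I expect the only real care is needed. Nonnegativity of the four counts gives $i=u\ge 0$, $d=i-b\ge 0$ (so $i\ge b$), and $r,l\ge 0$ (so $i\le \min\{\tfrac{\ell+a+b}{2},\tfrac{\ell+b-a}{2}\}$). The lower limit $i=b$ in the stated sum is exactly the constraint $d\ge 0$. For the upper limit I would invoke the standard convention that $1/k!=0$ whenever $k$ is a negative integer, so that any index $i$ exceeding the true upper bound $\tfrac{\ell+b-|a|}{2}$ forces $r$ or $l$ to be negative and therefore contributes a vanishing term. Since $\tfrac{\ell+b-|a|}{2}\le \tfrac{\ell+b}{2}=b+(\ell-b)/2$, the sum may be terminated at the stated value without changing its value. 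This reconciliation of the summation limits with the nonnegativity constraints is the main (and essentially the only) subtlety; the remainder is the elementary multinomial count described above.
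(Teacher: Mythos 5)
Your proof is correct and follows essentially the same route as the paper's: classify the $\ell$ steps by direction, impose the displacement and length constraints, count arrangements with a multinomial coefficient, and sum over the number of upward steps $i=n_U$. The only difference is that you are more careful than the paper about the summation range (invoking the convention $1/k!=0$ for negative integers to justify the stated upper limit when $a\neq 0$), which is a welcome clarification rather than a departure.
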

\vspace{3mm}

\begin{proof}
We consider
$n_U$ the number of steps you go up,
$n_D$ the number of steps you go down,
$n_L$ the number of steps you go left,
$n_R$ the number of steps you go right.
The number of walks of length $\ell$ from $(0,0)$ to $(a,b)$
is equal to the number of ways you can choose $\ell$ steps
between up, down, left, and right, such that
$n_U=n_D+b$, $n_R=n_L+a$, and $n_U+n_D+n_L+n_R=\ell$.
If we fix $n_U=i$ then $n_D=i-b$, $n_L=(\frac{\ell+b-a}{2}-i)$,
$n_R=(\frac{\ell+a+b}{2}-i)$
and the number of closed walks of length $\ell$ with
$i$ steps going up is equal to
$\frac{\ell!}{i!(i-b)!(\frac{\ell+a+b}{2}-i)!(\frac{\ell+b-a}{2}-i)!}$.
Summing up for all possible values of $n_U$,
the number of paths is equal to eq.~\eqref{eq:bamako231}.
\end{proof}

From Proposition~\ref{prop:thedang} and the mapping function~$h$,
we obtain a lower bound on the number of walks of length~$\ell$
on the two-dimensional torus~$\mathcal{T}^2_m$ 
between two different nodes~$i$ and $j$
by mapping $i$ and $j$ to $h(i)=(x_i,y_i)$ and $h(j)=(x_j,y_j)$
and computing the number of walks
from $(0,0)$ to \mbox{$(x_j-x_i,y_j-y_i)$}.
We observe that the previous lower bound
is tight only when $\ell+a+b$ is even.

\section{Numerical Simulations}\label{sec:numerical}

In Fig.~\ref{fig:timbuktu1},
we compute the number of closed walks
over the torus of length $m=5$ and over the lattice vs the length of the closed walks $\ell$.
We notice that for small values of $\ell$ to approximate the number of closed walks
over the torus is more precise than for large values of $\ell$.
This fact is compensated since in the characteristic polynomial
the important terms are the terms of smaller order.
In Fig.~\ref{fig:timbuktu2},
we compute the difference on percent error of this approximation
vs the length of the closed walks $\ell$.

In Fig.~\ref{fig:timbuktu3},
we compute the number of closed walks
over the torus and over the lattice
by varying the torus length~$m$
and keeping constant the length of the closed walks~$\ell$.
We notice that for small values of the torus length $m$
the approximation is not tight, however it becomes tighter
relatively fast.
In Fig.~\ref{fig:timbuktu4},
we compute the difference on percent error of this approximation
vs the torus length $m$.

\addtolength{\textheight}{-3cm}   


\section{Conclusions}\label{sec:conclusions}
In this work, we have analyzed information spreading on almost torus networks
assuming the Susceptible-Infected-Susceptible spreading model as a metric
of information spreading.
Almost torus networks consist on the torus network topology
where some nodes or edges have been removed.
We have provided  analytical expressions for the characteristic polynomial
of these graphs and we have provided as well  tight lower bounds for its computation.
Using these expressions we are able to estimate their spectral radius 
and thus to know how the information spreads on these networks.
Simulations results that validated our analysis are presented.

\section*{Acknowledgments}

The work presented in this paper has been partially carried out at LINCS (\url{http://www.lincs.fr}).

\bibliographystyle{hieeetr}
\bibliography{mybibfile}

\end{document}